\def\be{\begin{eqnarray}}
	\def\ee{\end{eqnarray}}
\def\b*{\begin{eqnarray*}}
	\def\e*{\end{eqnarray*}}
\newtheorem{Theorem}{Theorem}[part]
\newtheorem{Definition}{Definition}[part]
\newtheorem{Proposition}{Proposition}[part]
\newtheorem{Remark}{Remark}[part]
\newcommand{\ba}{\begin{array}}
	\newcommand{\ea}{\end{array}}
\newcommand{\ben}{\begin{equation*}} 
	\newcommand{\een}{\end{equation*}}
\newcommand{\bea}{\begin{eqnarray}}
	\newcommand{\eea}{\end{eqnarray}}
\newcommand{\bean}{\begin{eqnarray*}} 
	\newcommand{\eean}{\end{eqnarray*}}
\newcommand{\bel}{\begin{align}} 
	\newcommand{\eel}{\end{align}}
\newcommand{\beln}{\begin{align*}} 
	\newcommand{\eeln}{\end{align*}}
\newcommand{\bit}{\begin{itemize}}
	\newcommand{\eit}{\end{itemize}}
\makeatletter \@addtoreset{equation}{section}
\newcommand{\hs}{\hspace{3mm}}
\def \E{\mathbb{E}}
\def \H{\mathbb{H}}
\def \L{\mathbb{L}}
\def \M{\mathbb{M}}
\def \N{\mathbb{N}}
\def \P{\mathbb{P}}
\def \Q{\mathbb{Q}}
\def \R{\mathbb{R}}
\def \Z{\mathbb{Z}}
\def \G{\mathbb{G}}
\def\Tc{{\cal T}}
\def\={\;=\;}
\def\.{\;.}
\def\eps{\varepsilon}
\def\1{{\bf 1}}
\def\eps{\epsilon}
\def\normeL2#1{\left\|{#1}\right\|_{L^2}}
\newcommand{\alias}[2]{
	\providecommand{#1}{}
	\renewcommand{#1}{#2}
}
\alias{\P}{\mathbb{P}}
\alias{\N}{\mathcal{N}}
\alias{\L}{\mathcal{L}}
\alias{\Z}{\mathbb{Z}}
\alias{\Q}{\mathbb{Q}}
\alias{\R}{\mathbb{R}}
\alias{\C}{\mathcal{C}}
\alias{\T}{\mathbb{T}}
\alias{\E}{\mathbb{E}}
\alias{\H}{\mathcal{H}}
\alias{\B}{\mathcal{B}}
\alias{\M}{\mathcal{M}}
\alias{\G}{\mathcal{G}}
\alias{\Y}{Y_{\bullet}}
\newcommand{\nc}{\newcommand}
\nc{\cA}{{\mathcal A}} \nc{\cB}{{\mathcal B}} \nc{\cC}{{\mathcal
		C}} \nc{\cD}{{\mathcal D}} \nc{\bbD}{\mathbb{D}}
\nc{\cG}{{\mathcal G}} \nc{\cF}{{\mathcal F}} \nc{\cS}{{\mathcal
		S}} \nc{\cU}{{\mathcal U}} \nc{\cH}{{\mathcal H}}
\nc{\cK}{{\mathcal K}} \nc{\cM}{{\mathcal M}} \nc{\cO}{{\mathcal
		O}} \nc{\cP}{{\mathcal P}} \nc{\bbE}{\mathbb{E}}
\nc{\bbEP}{\mathbb{E}_{\mathbb{P}}}\nc{\bbL}{\mathbb{L}}
\nc{\bbP}{\mathbb{P}} \nc{\bbQ}{\mathbb{Q}} \nc{\del}{\partial}
\nc{\Om}{\Omega} \nc{\om}{\omega} \nc{\bbR}{\mathbb{R}}
\nc{\bbC}{\mathbb{C}} \nc{\bfr}{\begin{flushright}}
	\nc{\efr}{\end{flushright}} \nc{\dXt}{\delta q_{t}}
\nc{\dXs}{\delta q_{s}} \nc{\bs}{\blacksquare} \nc{\dX}{\delta q}
\nc{\dY}{\Delta Y}
\nc{\dnkx}{\left(X(T^{n}_{k})-X(T^{n}_{k-1})\right)}
\nc{\esssup}{\mathrm{ess}\mbox{ }\mathrm{sup}}
\nc{\essinf}{\mathrm{ess}\mbox{ } \mathrm{inf}}
\nc{\dhats}{\widehat{\delta_s}}
\nc{\chf}{\mbox{$\mathbf1$}}
\nc{\ind}{\mathds{1}}
\nc{\htau}{\hat\tau}
\nc{\btau}{\bar\tau}
\newcommand{\dd}{\mathop{}\mathopen{}\mathrm{d}}
\begin{document}
\title{Exit Incentives for Carbon Emissive Firms}
\author{René Aïd\thanks{Université Paris Dauphine - PSL Research University.}
\qquad
Xiangying Pang\thanks{The Chinese University of Hong Kong, Department of Mathematics.}
\qquad
Xiaolu Tan\thanks{The Chinese University of Hong Kong, Department of Mathematics, research supported by Hong Kong RGC General Research Fund (project 14302622) and the Faculty Direct Grant.}
}
\maketitle
	
\begin{abstract}
We develop a continuous-time model of incentives for carbon emissive firms to exit the market based on a compensation payment identical to all firms. In our model, firms enjoy profits from production modeled as a simple geometric Brownian motion and do not bear any environmental damage from production. A regulator maximises the expected discounted value of firms profits from production minus environmental damages caused by production and proposes a compensation payment whose dynamics is known to the firms. We provide in both situations closed-form expressions for the compensation payment process and the exit thresholds of each firms.  We apply our model to the crude oil market. We show that market concentration both reduces the total expected discounted payment to firms and the expected closing time of polluting assets. We extend this framework to the case of two countries each regulating its own market. The presence of a second mover advantage leads to  the possibility of multiple equilibria. Applying this result to large producing countries, we find that they are unlikely to agree on the timing to exit market.
\end{abstract}

\hs

{\bf keywords}: incentives, auction, stochastic game, stopping times, pollution control.

\hs

{\bf JEL}: C61, D44, Q52

\hs

{\bf AMS}:  60G40, 91A15,  91B03, 91B26, 91B41, 93E20.

\tableofcontents

\section{Introduction}
	
Since the signature of Kyoto's protocol in 1997, a lot of efforts have been engaged by countries around the world to reduce their green house gas emissions rate. To give an idea of this effort, it suffices to look at the evolution of the renewable energy installed capacities in the European Union between 2000 and 2024. In 2000, there were respectively 13~GW and 0,6~GW of installed capacity of wind and solar production while these numbers are 265~GW and 338~GW in 2024. Even if the investment cost in these two technologies dropped significantly due to economies of scale and technological progress, at a rough estimate of 1~billion euro/GW, this makes a 500~billion euro spending over the period. Compared to this 40 times fold multiplication of non-emissive installed generation capacity, one should also observe that the installed capacity of coal-fired plant only dropped by 33\% over the same period, going from 150~GW to 100~GW. This illustrates a major issue in the fighting of green house gaz emission: it is easier to add green production than to remove brown assets. In the former case, one provides new income to the investor while in the latter, it requires to suppress an income stream to the owner.\footnote{The figures appearing in this paragraph are taken from International Renewable Energy Agency publication for costs (Renewable Power Generation Costs in 2023), from Wikipedia's web page on European Wind production.}\textsuperscript{,}\footnote{Of course, relative to the European Union GDP of $17\,000$ billion euro this investment of 25~billion euro per year is a drop of water. It is also a drop of water relative to EU yearly public spending of $7800$ billion euros.}

This work is precisely concerned on the difficult part of the decarbonation of the global economy which is the removal of emissive assets that are currently providing profits to firms. In environmental economics, this problem is usually tackled either using a taxation of the polluters with the so-called Pigouvian tax (\cite{Pigou20}, \cite{Boyer99}) or a cap-and-trade system finding its origin in the seminal work of \cite{Montgomery72}. If the efficiency of these economic mechanisms is recognized to induce a pollution reduction (see \cite{Kohn94} for a theoretical analysis and \cite{Eslahi24} for a recent assessment of the efficiency of the European Union cap-and-trade system of carbon emissions), it is not clear wether it is enough to induce a market exit of the polluting firms (see \cite{Biorn98} for an empirical analysis). In this paper, we consider an alternative incentive mechanism based on the compensation of firms for their value lost because of market exit. 

Our framework is based on the incentive auction mechanism developed in \cite{He23},  which is based on the Bank-El Karoui's representation of stochastic processes in \cite{BEK2004}. In our model, firms enjoy profits from production modelled as a simple geometric Brownian motion and do not bear any environmental damage from production. Parameters are set in such a way that firms have no incentive to ever quit the market. A regulator maximises the expected discounted value of firms profits from production minus environmental damages caused by production. At each instant, the regulator proposes a compensation payment whose dynamics is known to the firms. Knowing the payment dynamics, each firm can decide whether the current payment value compensate its market exit. The design of the compensation process is made difficult by the fact that {\em any} firm, large or small, can be a claimant of the proposed compensation. 

Our model allows to answer to several simple research questions on the decarbonation of economy. First, it allows to assess how much does it cost to simply compensate polluting firms to stop their activities and how much time does it take. Indeed, in our model the regulator uses discounting to maximise a compromise between the profits from production which provides a service to society and the environmental damages. Thus, as long as present profits outweigh future damages, it is socially optimal for the regulator to keep production activity. We compare the second-best incentive mechanism induced by our compensation mechanism to a first-best situation where the regulator would directly make the compromise between profits and damages and decide when to stop. Our model also allows to asses how market fragmentation or concentration affects the cost and timing of market exit. Moreover, we consider a situation where production activity is under the control of two regulators. This case captures the conflict between two countries who enjoys individual profit from production but collective environmental damages. In this last situation, regulators optimal incentives compensations form a Nash equilibrium. In both cases, one or two regulators, we illustrate our findings using the crude oil market. If one would be willing to reach the emission targets of the Paris agreement of limiting global warming to 1.5 degree, one had not only to stop immediately oil consumption but also to extract carbon dioxyde from the atmosphere. But,  oil industry represents a business of 100~million barils per day, which amounts at roughly 1\% of world GDP on its own and 1.4 billions road vehicles in the world, which makes not an easy task to stop immediately oil production.

\hs

\noindent{\bf Results} First, in the case of a single regulator of a market served with $N$ firms, we provide a closed-form expression of an optimal incentive compensation process and the induced optimal exit times of each firm. In the case of a game of stopping times between two regulator managing their market served by a monopolistic firm, we characterise the different Nash equilibria that may arise and give closed-form expressions for the exit thresholds in all situations. In particular, we show that under mild conditions, there are two possible Nash equilibria. Indeed, the situation faced by the two regulators involves a second mover advantage: the second to exit benefits from less damages induced by the disappearance of the other production and still enjoys its own profit from production.  Second, using the crude oil market as an illustrative device, we analyses the effect of the discount rate, of the number of firms and of the market concentration on the expected payment from the regulator and the expected time to halt all production. We observe that large discount rate postpone the moment to cease production but reduces the amount to be paid to terminate it. The number of firms as well as market concentration have an increasing effect on the total expected payment required to halt all production. But, market concentration has a much larger effect on this variable. Moreover,  while the number of firms increases the expected time to close production, market concentration tends to drastically reduces it. Hence, it is more costly but quicker to close a highly concentrated market rather than a fragmented market.

\hs

\noindent{\bf Related literature} The present work lies at the intersection between optimal incentive theory in continuous-time,  auction theory and optimal stopping with applications to decarbonation of the economy. The incentive mechanism proposed here owes to \cite{Sannikov08} seminal paper on Principal-Agent in continuous-time and \cite{Cvitanic18} for its generalisation using second-order backward stochastic differential equations. In terms of economic applications, a closely related model of decarbonation of electricity generation using precisely this framework has been recently developed in \cite{Aid25} where carbon emissive production capacities are removed at an absolutely continuous rate. The same problem is also investigated in \cite{Kharroubi19} using a similar framework. Models of decarbonation using explicitly market exit modelled by stopping times includes \cite{Aid21} using a mean-field approach and numerical resolution of the related PDEs and also \cite{Bouveret22}. Less related but close enough to be cited, \cite{Nutz22}'s paper develops an incentive mechanisms to make a crowd of agents exit a game depending on their ranking. Our paper also owes to the literature of auction theory (see \cite{Milgrom04} and \cite{Krishna09} for introductions). Indeed, our compensation process is the promise of payment and the regulator can be interpreted as an auctioneer.

\hs

\noindent{\bf Roadmap} The rest of the paper is organized as follows. In Section \ref{sec:one-country}, we introduce our model of a single market with one regulator and multiple carbon emissive firms. Under the geometric Brownian motion production framework, we provide an explicit solution to the regulator's optimal exit incentives problem using hitting times as exit time. Then in Section \ref{sec:2markets}, we consider a model with two interactive markets with two different regulators and study a Nash equilibrium problem. We also compare the Nash equilibrium solution to the first best and second best solutions when the two market have the same regulator. The explicit solutions are then implemented in Section \ref{sec:eco_application} with real world data. Finally, some technical proofs are reported in Appendix.

\section{The case of a single market}
\label{sec:one-country}

\subsection{Model}
\label{subsec:model}

We consider that the market for a carbon-intensive good like crude oil or coal is served by $N \geq 1$ producers. The total production rate  $X_t$ is governed by a geometric Brownian motion with volatility $\sigma >0$ and an average growth rate $\mu >  \sigma^2/2$ according to the dynamics 
$$
	\dd X_t = \mu X_t \dd t + \sigma X_t \dd W_t, 
	~~~
	X_0 = x_0,
$$ 
where $W_t$ is a standard Brownian motion, $x_0 > 0$ is the initial production intensity. Since our focus is on long-term decision-making, we disregard the short-term fluctuations in profits among the firms serving the market and assume that these profits are directly proportional to the level of production. These firms can be classified based on average market share $\lambda_i>0$ summing to 1 and unit profit $\pi_i>0$ increasing in $i$, i.e.
	\begin{equation} \label{eq:lambda_pi_condition}
		\sum_{i=1}^N \lambda_i = 1,
		~~\mbox{and}~
		\pi_i < \pi_{i+1}, ~~i = 1, \cdots, N-1.
	\end{equation}

	Then then expected discounted profit at rate $\rho$ of firm $i$  over the period $(0,T)$ is given by
	\begin{align*}
		\E\Big[\! \int_0^T e^{-\rho t} \pi_i \lambda_i X_t \dd t \Big] 
		~=~
		\frac{\pi_i \lambda_i X_0}{\rho - \mu} \big(1 - e^{-(\rho-\mu)T} \big).
	\end{align*}
	Thus, the firm has no incentive ever to stop. Further, we assume $\rho>\mu$, so that the total expected discounted profit of firm $i$ is
	\begin{equation} \label{eq:def_Z0}
		Z^i_0 
		~:=~
		\E\Big[\!  \int_0^{+\infty} e^{-\rho t} \pi_i \lambda_i X_t \dd t \Big] 
		~=~
		\frac{\pi_i \lambda_i X_0}{\rho - \mu}.
	\end{equation}
	
	The carbon-intensive production induces damages, which is not borne by the firms. 
	We assume that the damage function is of the form  
	$$
		L(x) := \ell x^{\gamma},
		~~\mbox{with constants}~\ell > 0~\mbox{and}~\gamma \geq 2.
	$$ 
	Thus, the total social surplus of production (profit from firms minus damages) over a period $(0,T)$ is given by
	\begin{align*}
		U_T ~:=~&
		\E\Big[ \int_0^{T} e^{-\rho t} \Big(\bar \pi_1 X_t   - \ell X_t^{\gamma} \Big)\dd t\Big] \nonumber \\
		=~&
		\frac{ \bar \pi_1 X_0}{\rho - \mu} (1- e^{-(\rho-\mu)T}) + \ell X_0^{\gamma} \frac{e^{-(\rho -\gamma \mu- \frac{1}{2}\sigma^2(\gamma^2 - \gamma)) T }-1}{\rho -\gamma \mu - \frac{1}{2}\sigma^2(\gamma^2 - \gamma)},
		~~\mbox{with}~
		\bar \pi_1 := \sum_{i=1}^N \pi_i \lambda_i.
	\end{align*}
	When $\rho -\gamma \mu- \frac{1}{2}\sigma^2(\gamma^2 - \gamma)<0$, the problem becomes interesting since $U_T \to -\infty$ as $ T \to +\infty$. 
	Hence, we assume that 
	$$
		\rho \in (\mu,\gamma \mu+ \frac{1}{2}\sigma^2(\gamma^2 - \gamma)).
	$$

\hs

	We look for an appropriate incentive mechanism that encourages the  firms to exit the market. 
	Let $\mathbb{F}$ be the natural filtration generated by $X_t$,
	we denote by $\mathcal{T}$ the set of all $\mathbb{F}-$stopping times such that 
	$$
		\E \big[ e^{-\rho \tau}X_{\tau}^{\gamma} \log^+(e^{-\rho \tau} X_{\tau}^{\gamma}) \big] <\infty,
		~~\mbox{with}~
		\log^+(x) = \max(\log (x),0).
	$$
	The regulator seeks an incentive mechanism $Y = (Y_t)_{t \ge 0}$ that is adapted to the production history $X = (X_t)_{t \ge 0}$ such that each firm $i$ solves 
	\begin{align} \label{eq:pb_agent_i}
		\sup_{\tau_i \in \mathcal{T}} \E\Big[ \int_0^{\tau_i} e^{-\rho t} \pi_i  \lambda_i X_t \dd t + \lambda_i e^{-\rho \tau_i}   Y_{\tau_i} \Big],
	\end{align}
	and the regulator wants to optimize
	\begin{align} \label{eq:pb_principal}
	v(x_0) := \sup_{Y} \E \bigg [ \!\int_0^{\infty} \!\!\! e^{-\rho t} \bigg( \sum_{i = 1}^N \pi_i \lambda_i  X_t \1_{\{t \le \htau_i\}} - \ell \Big(\sum_{i=1}^N \lambda_i X_t \1_{\{t \le \htau_i\}} \Big)^{\gamma}\bigg) \dd t - \sum_{i=1}^N \lambda_i e^{-\rho \htau_i} Y_{\htau_i} \bigg| X_0 = x_0 \bigg],
	\end{align}
where $\htau_i$ is the best-response of the firm $i$ to the incentive scheme $Y$. 
	
\hs

By a trivial extension of the work of \cite{He23} from the uniform bounded stopping times to the stopping times in $\mathcal{T}$, the optimation problem of the regulator can be brought back to a multiple stopping problem. Indeed, since $X_t > 0$ and $\pi_i < \pi_j$ for $i < j$, it follows that, with the same exit incentive scheme $Y$, the optimal stopping times $(\htau_i)_{i=1, \cdots, N}$ satisfies $\htau_i < \htau_{i+1}$.
	Therefore, the regulator's problem in \eqref{eq:pb_principal} is equivalent to
	\begin{align} \label{eq:regu}
		\sup_{Y}~
		\E \bigg[ 
		\sum_{i=1}^N  \Big(
		\int_{\htau_{i-1}}^{\htau_i} e^{-\rho t} \Big( \sum_{j = i}^N \pi_j \lambda_j  X_t - \ell \Big( \sum_{j=i}^N\lambda_j X_t \Big)^{\gamma} \Big) \dd t - \lambda_i e^{-\rho \htau_i} Y_{\htau_i} 
		\Big)
		\bigg| X_0 = x_0 \bigg],
	\end{align}
	with $\htau_0 := 0$. 
	Further, following \cite{He23},  the regulator's problem \eqref{eq:pb_principal} is further equivalent to the multiple stopping problem:
	\begin{align} \label{eq:regu_equiv}
		v(x_0) = \sup_{\tau_1 \leq \cdots \leq \tau_N} \E\Big[
		\sum_{i=1}^N \Big( 
		\int_{\tau_{i-1}}^{\tau_i} e^{-\rho t} \big( \bar \pi_i X_t - \ell (1-\tilde \lambda_{i-1})^{\gamma} X_t^{\gamma} \big) \dd t  
		-
		e^{-\rho \tau_i} (\tilde{\lambda}_i c_i - \tilde{\lambda}_{i-1} & c_{i-1}) X_{\tau_i} 
		\Big) \nonumber \\
		&~~
		\Big| X_0 = x_0
		\Big],
	\end{align}
	with $\tau_0 := 0,  c_0 := 0,   \tilde{\lambda}_0 := 0,$ and for each $i=1, \cdots, N$,
\begin{equation}  \label{def:coef}
c_i = \frac{\pi_i }{\rho-\mu},\quad
\bar \pi_i := \sum_{j=i}^N \pi_j \lambda_j, \quad
\tilde{\lambda}_i := \sum_{j=1}^i \lambda_j. 
\end{equation}

\subsection{Main result}
\label{subsec:1market}

	Our first main result consists of a resolution of the principal's problem \eqref{eq:pb_principal}.
	Let
	\begin{align} \label{def:hitting_point}
		\hat x_{i} ~:=~ \Big( \frac{b(m-1)}{a \ell (\gamma-m)}  ~\frac{\pi_i(\lambda_i + \tilde{\lambda}_i) - \pi_{i-1} \tilde{\lambda}_{i-1} } { \big(1-\tilde{\lambda}_{i-1} \big)^{\gamma} - \big(1-\tilde{\lambda}_{i} \big)^{\gamma} } \Big)^{1/({\gamma}-1)},
		~~i = 1, \dots, N,
	\end{align}
	and
	\begin{equation} \label{eq:def_abm}
		a := \frac{1}{\gamma [\mu + \frac{1}{2} \sigma^2 (\gamma - 1)] -\rho} >0,
		~~
		b:= \frac{1}{\rho-\mu} >0,
		~~
		m ~:=~ \frac{1}{2} - \frac{\mu}{\sigma^2} + \sqrt{ \Big( \frac{1}{2} - \frac{\mu}{\sigma^2} \Big)^2 +\frac{2\rho}{\sigma^2} }.
	\end{equation}
	Recall that $\gamma \ge 2$ and $\rho \in (\mu,\gamma \mu+ \frac{1}{2}\sigma^2(\gamma^2 - \gamma))$, then one has $m\in (1,\gamma)$.

\begin{Theorem}[{\rm Optimal exit incentives for $N$ firms}]\label{theo:Nfirms} 
	Let $(\lambda_i)_{i=1,\cdots, N}$, $(\pi_i)_{i=1, \cdots, N}$ satisfy \eqref{eq:lambda_pi_condition}, $\gamma \ge 2$ and $\rho \in (\mu,\gamma \mu+ \frac{1}{2}\sigma^2(\gamma^2 - \gamma))$.
	Then the following hold true.
	
	\vspace{0.5em}
	
	\noindent $\mathrm{(i)}$ The constants $(\hat x_i)_{i=1, \cdots, N}$ defined by \eqref{def:hitting_point} satisfy ${\hat x_1} < \hat x_2 < \cdots < \hat x_N$.

\vspace{0.5em}
	
	\noindent $\mathrm{(ii)}$ Let $\eps>0$. An optimal incentive payment process $\widehat Y = (\widehat Y_t)_{t \ge 0}$ is given by

\begin{equation} \label{eq:def_Yhat}
	\widehat Y_t 
	~:=~
	\sum_{i=1}^{N} \Big( \big( c_i X_t + d_i X_t^m -\eps \big) \1_{\{\htau_{i-1} < t < \htau_i\}}  +  \big( c_i X_t + d_i X_t^m \big) \1_{\{t = \htau_i\}} \Big),
	~t \ge 0,
\end{equation}
	where $c_i$ is defined in \eqref{def:coef}, $\hat \tau_0 = 0$, $\hat \tau_i :=\inf\{ t \ge 0 ~: X_t \ge \hat x_i \}$ and
	\begin{align*}
	d_i := \sum_{j=i+1}^N b(\pi_j - \pi_{j-1}) \Big(\1_{ \{x_0 \geq \hat x_j\}}  x_0^{1-m} + \1_{ \{x_0 < \hat x_j\}} \hat x_j^{1-m} \Big),
	~~i = 1, \cdots, N.
	\end{align*}

\vspace{0.5em}
	
	\noindent $\mathrm{(iii)}$ Under the optimal incentive above, firms $i$ exits the market at time $\hat \tau_i$ for each $i=1, \cdots, N$.
		
\vspace{0.5em}
	
	\noindent $\mathrm{(iv)}$ For any initial condition $x_0 > 0$, the value function in \eqref{eq:pb_principal} is given by
\begin{align} \label{eq:value_fun}
	v(x_0)  = & \sum_{i = 1}^N \bigg(  \1_{ \{x_0 < \hat x_i\}} \big( -2 \pi_i \lambda_i   b \hat x_i  -  \tilde \lambda_{i-1}  (\pi_i - \pi_{i-1}) b  \hat x_i - a \ell \big((1-\tilde \lambda_{i-1})^\gamma - (1-\tilde \lambda_{i})^\gamma \big) \hat x_i^\gamma \big) \big( \frac{x_0}{\hat x_i}\big)^m \nonumber \\ 
	&~~~~~~~ + \1_{ \{ \hat x_{i-1} \leq x_0 < \hat x_i\}} \big( b \bar \pi_i x_0 + a \ell (1-\tilde \lambda_{i-1})^\gamma x_0^\gamma \big) - \1_{ \{x_0 \geq \hat x_i\}}(\pi_i \tilde \lambda_i - \pi_{i-1} \tilde \lambda_{i-1})bx_0 \bigg) ,
\end{align}
where $\hat x_0 := 0$ and $\tilde \lambda_i$, $\bar \pi_i$ are defined by \eqref{def:coef}.

\end{Theorem}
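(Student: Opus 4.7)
The plan is to first establish part (i) by a direct monotonicity computation, then solve the multi-stopping reformulation \eqref{eq:regu_equiv} by backward induction on the stage index, and finally construct the incentive process $\widehat Y$ in \eqref{eq:def_Yhat} and verify its incentive compatibility via martingale arguments. For part (i), it suffices to show that the ratio $P_i/D_i$ with $P_i := \pi_i(\lambda_i + \tilde\lambda_i) - \pi_{i-1}\tilde\lambda_{i-1}$ and $D_i := (1-\tilde\lambda_{i-1})^\gamma - (1-\tilde\lambda_i)^\gamma$ is strictly increasing in $i$; since $x\mapsto x^{1/(\gamma-1)}$ is increasing, this transfers to the thresholds. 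Strict monotonicity of $P_i/D_i$ follows from $\pi_{i+1}>\pi_i$ together with the strict convexity of $t\mapsto (1-t)^\gamma$ on $[0,1]$ for $\gamma \ge 2$, by a direct algebraic comparison (for instance, writing $D_j = \gamma\,\lambda_j(1-\xi_j)^{\gamma-1}$ for some $\xi_j\in(\tilde\lambda_{j-1},\tilde\lambda_j)$ via the mean value theorem).

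For parts (iii) and (iv), I would solve \eqref{eq:regu_equiv} by a backward recursion. Set $V_{N+1}\equiv 0$ and, for $i = N,\ldots,1$, define
\begin{equation*}
V_i(x) = \sup_{\tau\in\mathcal{T}}\E_x\!\Big[\int_0^\tau e^{-\rho t}\big(\bar\pi_i X_t - \ell(1-\tilde\lambda_{i-1})^\gamma X_t^\gamma\big)\dd t - e^{-\rho\tau}(\tilde\lambda_i c_i - \tilde\lambda_{i-1} c_{i-1})X_\tau + e^{-\rho\tau}V_{i+1}(X_\tau)\Big],
\end{equation*}
so that $v(x_0)=V_1(x_0)$. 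Inside the conjectured continuation region $\{x<\hat x_i\}$ I use the ansatz $V_i(x) = b\,\bar\pi_i\,x + a\,\ell(1-\tilde\lambda_{i-1})^\gamma x^\gamma + A_i\,x^m$: the first two terms arise by matching powers of $x$ in the Feynman--Kac ODE $\rho V - \mu x V' - \tfrac12\sigma^2 x^2 V'' = \bar\pi_i x - \ell(1-\tilde\lambda_{i-1})^\gamma x^\gamma$, while $x^m$ is the unique positive root of the characteristic equation $\tfrac12\sigma^2 m(m-1)+\mu m -\rho = 0$ (the negative-exponent homogeneous solution is excluded by boundedness at the origin, and $m\in(1,\gamma)$ by the standing assumptions). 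Thanks to part (i) one has $\hat x_i<\hat x_{i+1}$, so the inductive form of $V_{i+1}$ is valid at $\hat x_i$; value-matching and smooth-pasting at $\hat x_i$, after eliminating $A_i - A_{i+1}$, reduce to the single equation \eqref{def:hitting_point} which determines $\hat x_i$, and give $A_i - A_{i+1}$ in closed form. A standard Itô verification (showing the ansatz satisfies the stage-$i$ variational inequality) then shows $\hat\tau_i = \inf\{t\ge 0 : X_t\ge \hat x_i\}$ is optimal, and summing the recursion while splitting according to the position of $x_0$ among the thresholds yields the piecewise formula \eqref{eq:value_fun}.

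For part (ii), the incentive $\widehat Y$ is defined by \eqref{eq:def_Yhat} and the key task is to show that firm $i$'s best response is exactly $\hat\tau_i$. Applying Itô to $e^{-\rho t}c_i X_t$ and using $c_i(\rho-\mu)=\pi_i$ gives $\E[\int_0^\tau e^{-\rho t}\pi_i X_t\dd t + e^{-\rho\tau}c_i X_\tau] = c_i x_0$ for every stopping time $\tau$, so firm $i$'s problem reduces, after dividing by $\lambda_i$, to maximising $\E[e^{-\rho\tau}(\widehat Y_\tau - c_i X_\tau)]$. The decisive tool is that, by the choice of $m$, the process $e^{-\rho t}X_t^m$ is a positive martingale, so Doob's optional stopping gives $\E[e^{-\rho\tau}X_\tau^m]\le x_0^m$ with equality along each hitting time $\hat\tau_k$ (the stopped process being bounded by $\hat x_k^m$); jointly with the supermartingale property of $e^{-\rho t}X_t$ (since $\rho>\mu$) one gets the sharp identity $\E[e^{-\rho\hat\tau_k}X_{\hat\tau_k}] = x_0^m \max(x_0,\hat x_k)^{1-m}$. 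Combining these with the telescoping relations $c_j-c_i = \sum_{k=i+1}^j b(\pi_k-\pi_{k-1})$ and its analogue for $d_i-d_j$, I would verify piecewise (case-by-case on the stage interval into which $\tau$ falls) that $\E[e^{-\rho\tau}(\widehat Y_\tau - c_i X_\tau)]\le d_i x_0^m$ with the bound attained at $\tau = \hat\tau_i$; the $-\eps$ on the open intervals accounts for the penalty that singles out stopping exactly at a designated threshold hit. Plugging these best responses back into \eqref{eq:pb_principal} reproduces \eqref{eq:value_fun} (in the limit $\eps\downarrow 0$), completing (ii).

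The main technical obstacle is the incentive-compatibility verification: beyond the easy deviations strictly inside the correct stage interval $(\hat\tau_{i-1},\hat\tau_i)$, one must rule out deviations into intervals $(\hat\tau_{j-1},\hat\tau_j)$ with $j\neq i$ where the offer has a different linear coefficient $c_j\neq c_i$, and handle carefully the jumps of $\widehat Y$ at the hitting times $\hat\tau_j$. The precise calibration of $d_i-d_{i+1}$ through the factor $\max(x_0,\hat x_{i+1})^{1-m}$ is exactly what makes the telescoping of linear-in-$X_t$ and in-$X_t^m$ pieces close, and marshalling this bookkeeping into a single clean piecewise inequality is the delicate part of the argument.
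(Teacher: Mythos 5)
Your treatment of parts (ii)--(iv) follows essentially the same route as the paper: reduce the regulator's problem to the nested multiple stopping problem \eqref{eq:regu_equiv}, solve it by backward induction with the ansatz $b\,\bar\pi_i x + a\ell(1-\tilde\lambda_{i-1})^\gamma x^\gamma + A_i x^m$ in the continuation region, determine $\hat x_i$ by value matching and smooth pasting, and confirm optimality of the hitting times by an It\^o verification. The one place you genuinely diverge is the incentive-compatibility of $\widehat Y$ in part (ii): the paper delegates this step entirely to the exit-contract framework of He, Tan and Zou, whereas you sketch a self-contained verification based on the martingale property of $e^{-\rho t}X_t^m$ and the identities $\E[e^{-\rho\hat\tau_k}X_{\hat\tau_k}^m]=x_0^m$ and $\E[e^{-\rho\hat\tau_k}X_{\hat\tau_k}]=x_0^m\max(x_0,\hat x_k)^{1-m}$. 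That is a reasonable and more transparent route, although you leave the case analysis over deviations into the wrong stage intervals (the part you correctly flag as delicate) unexecuted.

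The genuine gap is in part (i). Writing $P_i=\pi_i(\lambda_i+\tilde\lambda_i)-\pi_{i-1}\tilde\lambda_{i-1}$ and $D_i=(1-\tilde\lambda_{i-1})^\gamma-(1-\tilde\lambda_i)^\gamma=\gamma\lambda_i(1-\xi_i)^{\gamma-1}$, the monotonicity of $P_i/D_i$ that you assert ``follows from $\pi_{i+1}>\pi_i$ together with convexity by a direct algebraic comparison'' does not follow from these hypotheses alone, and it can fail for admissible parameters. After your mean-value reduction you need $P_i\,\lambda_{i+1}(1-\xi_{i+1})^{\gamma-1}<P_{i+1}\,\lambda_i(1-\xi_i)^{\gamma-1}$; but $P_i/\lambda_i=2\pi_i+\tilde\lambda_{i-1}(\pi_i-\pi_{i-1})/\lambda_i$ contains a term that blows up when firm $i$ has a tiny market share but a large profit jump over firm $i-1$, while the compensating ratio $(1-\xi_i)^{\gamma-1}/(1-\xi_{i+1})^{\gamma-1}$ stays bounded. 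Concretely, take $\gamma=2$, $\tilde\lambda_{i-1}=0.5$, $\lambda_i=0.01$, $\lambda_{i+1}=0.3$, $\pi_{i-1}=1$, $\pi_i=10$, $\pi_{i+1}=10.01$: then $P_i/D_i\approx 475$ while $P_{i+1}/D_{i+1}\approx 29$, so $\hat x_i>\hat x_{i+1}$. (The paper's own convexity argument carries the same hidden requirement, namely $P_{i+1}/\lambda_{i+1}\ge P_i/\lambda_i$, so the ordering in (i) genuinely needs an extra hypothesis such as $\pi_i\propto\lambda_i$ with $\lambda_i$ nondecreasing, which is what the numerical section uses; this is a defect of the statement, not only of your write-up.) Since the ordering $\hat x_i<\hat x_{i+1}$ is exactly what legitimises evaluating the stage-$(i+1)$ value function in its stopping region at the stage-$i$ threshold, the gap propagates into your backward induction for (iii)--(iv): your argument for those parts is valid only on the parameter set where (i) actually holds.
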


Notice that the optimal incentive payment process is not unique, one can always add to $\widehat Y$ in \eqref{eq:def_Yhat} a process $Y'$ such that $Y'_t \le 0$ for all $t \ge 0$ and $Y'_{\htau_i} = 0$ for each $i=1, \cdots, n$. Then the optimal exit time of firm $i$ will still be $\htau_i$ and the effective payment $\widehat Y_{\htau_i}  + Y'_{\htau_i}$ is the same as previous value $\widehat Y_{\htau_i} $. Otherwise, one can also add to $\widehat Y$ a process $M$ such that $e^{-\rho t} M_t$ is martingale with expectation value $0$, so that it does not change the optimal exit time of each firm and the expectation value of the effective payment.

The expression of $v(x_0)$ in \eqref{eq:value_fun} includes the expected revenue from the production, minus the social damage as well as subsidies. Let us provide the computation of the expected valued of the discounted subsidies below:
	
	\begin{itemize}
	\item[(i)] When $x_0 < \hat x_i$, one has $\htau_i >0$, $X_{\htau_i} = \hat x_i$, and $\E [ e^{-\rho \htau_i} \widehat{Y}_{\htau_i}]  = c_i \hat x_i^{1-m} x_0^m + d_i   x_0^m$ 
	
	\item[(ii)] When $x_0 \geq \hat x_i$, one has $\htau_i = 0$, $X_{\htau_i} = x_0 $ and $\E [ e^{-\rho \htau_i} \widehat{Y}_{\htau_i}]  = c_i x_0 + d_i x_0^m$.
	\end{itemize}
	The proof will be reported in Section \ref{app:main1} in Appendix.

	\vspace{0.5em}

	With the optimal incentive payment process $\widehat Y$, the firm $i$ exits the market at the hitting time $\hat \tau_i$ and receives the payment
	$$
		\widehat Y_{\hat \tau_i} 
		~=~
		c_i X_{\hat \tau_i} + d_i X^m_{\hat \tau_i} 
		~=~
		c_i \hat x_i + d_i \hat x_i^m.
	$$

	\noindent $\mathrm{(i)}$ When $N=1$, one can compute that $c_1 = \frac{\pi_1}{\rho-\mu}$ and $d_1 = 0$ and hence $\widehat Y_{\htau_1} = Z^1_{\htau_1}$ where $Z$ is the total expected discounted profit of firm $1$ without stopping, as defined in \eqref{eq:def_Z0}.
	Namely, to incitivise the firm to exit at time $\htau_1$, the regulator needs to provide the compensation $\widehat Y_{\htau_1} = Z^1_{\htau_1}$ which equivalent to Firm's total profit in the future if it never exits.

	\vspace{0.5em}

	\noindent $\mathrm{(ii)}$ When $N=2$, one can compute that 
	$$
		c_1 = \frac{\pi_1}{\rho-\mu},
		~~
		d_1 = \frac{\pi_2 - \pi_1}{\rho - \mu} \hat x_2^{1-m},
		~~
		c_2 = \frac{\pi_2}{\rho - \mu},
		~~
		d_2 = 0.
	$$
	For Firm $2$, the compensation at its exit time $\htau_2$ is $\widehat Y_{\htau_2} = Z^2_{\htau_2} = c_2 X_{\htau_2}$, which is equivalent to total future profit.
	However, for Firm $1$, the compensation at its exit time $\htau_1$ becomes $\widehat Y_{\htau_1} = c_1 X_{\htau_1} + d_1 X_{\htau_1}^m $ which is greater than its total future profit $Z^1_{\htau_1} = c_1 X_{\htau_1}$.
	In fact, due to the incitation payment $\widehat Y_{\htau_2} = c_2 X_{\htau_2}$ at time $\htau_2$, 
	Firm $1$ would have a different expected future profit.

Concretely, at time $\htau_1$, Firm $1$ has $c_1 X_{\htau_1} + d_1 X_{\htau_1}^m $ as optimal expected discounted profit with optimal exit time $\htau_1$.  This amount is the sum of the expected future profit of Firm 1 during the intervals $\htau_1$ and $\htau_2$, and the expected discounted value of $\widehat Y_{\htau_2}$.  We observe that, under the uniform compensation mechanism, this sum is greater than the value $Z^1_{\htau_1} = c_1 X_{\htau_1}$ which corresponds to the expected discounted profit of Firm 1 it it never exits.

The way firms are incentivised to leave the market is consistent with their size and their profit shares. Because each firm can seize the proposed amount $Y_t$, the regulator is compeled to design a nondecreasing payment process to make the payment attractive to small firms and non-interesting for large firms. This might look unfair as major firms are staying in the market for the longest period whereas they are the ones who pollute the most.

\subsection{The case of a single firm}

The case of a market served by a single firm allows to get insights in the incentive mechanism. Besides, it makes it easier to compare the properties of the incentive mechanism with a social optimum where the regulator would directly decide the stopping time that realises the optimal compromise between damage costs and production profit.

\hs

Applying Theorem~\ref{theo:Nfirms} with $N=1$ directly provides the optimal incentive payment $Y$ and the exit threshold $\hat x_1$ as
\begin{equation} \label{def:1Y}
	\widehat Y_t ~:=~   \frac{\pi_1}{\rho-\mu}\Big[ \big( X_{t} - \eps \big) \1_{ \{{t<\htau_1} \}} +  X_{t} \1_{ \{{t = \htau_1} \}}\Big], ~t \ge 0,
	~~\mbox{and}~
	\hat x_1 := \Big(\frac{m-1}{\gamma - m} \frac{b}{a} \frac{2 \pi}{\ell} \Big)^\frac{1}{\gamma-1},
\end{equation}
where
$$
	\hat \tau_1 :=\inf\{ t \ge 0 ~: X_t \ge \hat x_1 \}.
$$
	Moreover, the smallest best-response of the firm with incentive scheme $\widehat Y$ is also given by $\htau_1$, 
	then by Proposition \ref{prop:payment}, the expected compensation value is
	$$
		\E \big[e^{-\rho \htau} Z^1_{\htau_1} \big] ~=~ \frac{c x^m}{(\hat x_1)^{m-1}} ~=~ c x^m\Big(\frac{m-1}{\gamma - m} \frac{b}{a} \frac{2 \pi}{\ell} \Big)^\frac{1-m}{\gamma-1}, 
	$$
	and the total utility value is
	$$
		v(x_0) = \Big( b \pi x_0 + a\ell x_0^{\gamma}+ (-2  b \pi {\hat x_1} - a \ell {\hat x_1}^{\gamma}) \big( \frac{x_0}{{\hat x_1}} \big)^m \Big) \1_{\{ x_0<{\hat x_1}\}} - b  \pi  x_0 \1_{\{ x_0 \ge \hat x_1 \}}.
	$$

	\vspace{0.5em}

As a comparison, we consider the first-best situation where it would be possible to decide for the whole society the optimal moment when the damage costs outweighs the industry profit and it is time to quit.  In this scenario, one would address the standard optimal stopping problem given by
\begin{align} \label{eq:OT_1}
u(x) 
~:=~
\sup_{\tau \in \mathcal{T}} ~\E\Big[  \int_0^{\tau} e^{-\rho  t} \big( \pi X_t - \ell  X_t^{\gamma}\big) \dd t   \Big], \quad X_0 = x.
\end{align}
This is a standard stopping time  problem which solution is detailed in Section \ref{app:comp}. We have
\begin{align*}
	u(x_0) = \Big(  b \pi x_0+a \ell x^{\gamma}_0 + ( -b \pi \bar x_1-a \ell \bar x_1^{\gamma} )\Big( \frac{x_0}{\bar x_1} \Big)^m \Big) \1_{\{ x < \bar x_1 \}},
	~~\mbox{with}~~
	\bar x_1 :=  \Big( \frac{m-1}{\gamma - m} \frac{b}{a}  \frac{\pi}{\ell } \Big)^{\frac{1}{\gamma-1} }.
\end{align*}
Moreover, the hitting time $\btau_1 := \{t \ge 0~: X_t \ge \bar x_1 \}$ is an optimal solution to \eqref{eq:OT_1}.

\begin{Remark}{\rm 

	$\mathrm{(i)}$ According to the definition of $m$, we observe that $ \frac{m-1}{\gamma - m} \frac{b}{a} >1$. Thus, for all $\gamma \geq 2$, we have 
$$  \bar x_1 = \Big( \frac{m-1}{\gamma - m} \frac{b}{a}  \frac{\pi}{\ell } \Big)^{\frac{1}{\gamma-1} } > \Big(\frac{\pi}{\ell}\Big)^{\frac{1}{\gamma-1}}.$$	
Note that the ratio $(\frac{\pi}{\ell})^{\frac{1}{\gamma-1}}$ is the level after which the damage costs outweigh the profit, i.e. $f(x)<0$ when $(\frac{\pi}{\ell})^{\frac{1}{\gamma-1}} < x$. 

	\vspace{0.5em}
		
	\noindent $\mathrm{(ii)}$ We find that $\hat x_1 = 2^{\frac{1}{\gamma-1}} \bar x_1 > \bar x_1$ as $2^{\frac{1}{\gamma-1}} \in (1,2],$ which means more time is needed for the firm to exit the market under the incentive mechanism. That makes sense since the firm's revenue is sacrificed in the central planner case, and the regulator also needs to control subsidies in an affordable range. 
}
\end{Remark}

\section{The case of two interacting markets}\label{sec:two-countries}
\label{sec:2markets}

	We now consider the case with two interacting markets, where 	the social benefit from the production is individual, 
	and the damage from the carbon emission is global based on the total emission of the two countries.
	We study the problem in two different settings.
	In a first setting, there is one regulator for each market, and each regulator provides an incentive scheme to its own industry to stop the carbon intensive production.
	We look for the Nash equilibrium solutions of the problem.
	In a second setting, there is only one central planner.
	 The solution planner either applies the compensation scheme to incentivise the firm to stop the carbon intensive production and one looks for the second best solution,
	or decides the exit time of each of the two firms to stop and one looks for the first best solution.

\subsection{The setting with two regulators and Nash equilibrium}
\label{subsec:Nash}

	Let us first introduce the Nash equilibrium problem in the 2 regulator-firms setting.
	Given the exit times $\tau_1$ and $\tau_2$ of the two firms, the global social damage from the carbon emission is given by
	\begin{equation} \label{eq:def_Damage2}
	D(x_0, \tau_1, \tau_2)
	:= 
	\E \Big[ \int_0^{\infty}  e^{-\rho t} \ell \big(\lambda_1 X_t \1_{\{t \le \tau_1\}} + \lambda_2 X_t \1_{\{t \le \tau_2\}} \big) ^{\gamma}  dt \Big].
	\end{equation}
	This damage is half-half shared by the two countries. 
	In view of result in Section \ref{subsec:1market}, when Regulator~1 uses the compensation process to incentivise Firm 1 to exist at time $\tau_1$, he/she  needs to provide a compensation $Z^1_{\tau_1}$ with 
	$$Z_t^1:=\E \Big[ \int_t^{\infty} e^{-\rho(s-t)} \pi_1 \lambda_1 X_s ds \Big]  = \frac{\pi_1 \lambda_1}{\rho - \mu} X_t = \pi_1 \lambda_1 b X_t.$$ 
	Therefore, given the stopping time $\tau_2$ of the second firm, the optimal compensation problem of Regulator 1 becomes the following stopping problem:
	\begin{align}\label{eq:pb_regulator1}
		\sup_{\tau_1 \in \Tc} J_1(x_0,\tau_1,\tau_2) 
		~:=~
		\sup_{\tau_1 \in \Tc} \E \Big[ \int_0^{\tau_1}  e^{- \rho t}  \pi_1 \lambda_1 X_t dt - e^{-\rho \tau_1} Z_{\tau_1}^1 \Big] 
		- \frac12 D(x_0, \tau_1, \tau_2).
	\end{align}
	Similarly, given the stopping time $\tau_1$ of the first firm, Regulator 2 solves the other optimal stopping problem:
	\begin{align} \label{eq:pb_regulator2}
		\sup_{\tau_2 \in \Tc} J_2(x_0,\tau_1,\tau_2) 
		~:=~
		\sup_{\tau_2 \in \Tc} \E \Big[ \int_0^{\tau_2} e^{- \rho t} \pi_2 \lambda_2 X_t  dt - e^{-\rho \tau_2} Z_{\tau_2}^2 \Big] - \frac12 D(x_0, \tau_1, \tau_2).
	\end{align}
	where
	$$
		Z_t^2
		~:=~
		\E \Big[ \int_t^{\infty} e^{-\rho(s-t)} \pi_2 \lambda_2 X_s ds \Big]  = \frac{\pi_2 \lambda_2}{\rho - \mu} X_t = \pi_2 \lambda_2 b X_t.
	$$

	\begin{Definition}  \label{def:Nash}
		$\mathrm{(i)}$ A couple $(\tau^*_1, \tau^*_2)$ of stopping times is a Nash equilibrium of the 2 non-cooperative regulators problem 
		if 
		\begin{itemize}
			\item $\tau^*_1$ is solution to \eqref{eq:pb_regulator1} with given $\tau_2$, 
			\item $\tau^*_2$ is solution to \eqref{eq:pb_regulator2} with given $\tau_1$.
		\end{itemize}
		
		\noindent $\mathrm{(ii)}$ Given a Nash equilibrium $(\tau^*_1, \tau^*_2)$, 
		the corresponding total social utility is defined by
		\begin{align*}
			u(x_0)
			~:=~
			J_1(x_0, \tau^*_1, \tau^*_2) + J_2(x_0, \tau^*_1, \tau^*_2) .
		\end{align*}

	\end{Definition}

	Our main result is the existence of the Nash equilibrium which are given as hitting times.
	Let us first introduce the constants $z_{i,l}$, $z_{i, h}$ and $z_i$ for $i=1,2$ by
	\begin{equation} \label{eq:def_zlh}
		z_{i,l}: = \Big(\frac{m-1}{\gamma - m} \frac{4 \pi_i \lambda_i b}{a \ell (1- (1- \lambda_i)^{\gamma})} \Big)^{\frac{1}{\gamma-1}}  
		~~\mbox{and}~
		z_{i,h}: = \Big(\frac{m-1}{\gamma - m} \frac{4 \pi_i \lambda_i b}{a \ell \lambda_i^{\gamma}} \Big)^{\frac{1}{\gamma-1}},
	\end{equation}
	and
	\begin{equation} \label{eq:def_z}
		z_i
		~:=~
		\Big( \frac{4 \pi_i \lambda_i b(z_{i,l}^{1-m} - z_{i,h}^{1-m}) + a \ell (1- (1- \lambda_i)^{\gamma}) z_{i,l}^{\gamma -m} - a \ell \lambda_i^{\gamma} z_{i,h}^{\gamma -m} }{a \ell (1-\lambda_1^{\gamma}- \lambda_2^{\gamma})}  \Big)^{\frac{1}{\gamma-m}},
	\end{equation}
	Notice that $\lambda_1^{\gamma} + \lambda_2^{\gamma} <1$ so that it holds that $z_{i,l} < z_i < z_{i, h}$ for each $i=1, 2$.
	Further, let us define the stopping times
	$$
		\tau_{i,l} := \inf \big \{ t \ge 0 ~: X_t \ge z_{i,l} \big\},
		~~
		\tau_{i,h} := \inf \big \{ t \ge 0 ~: X_t \ge z_{i,h} \big\},
		~~ i = 1,2.
	$$

	\begin{Theorem} \label{thm:NashE}
		Assume that $x_0 < \min(z_{1,l}, z_{2,l})$.

		\vspace{0.5em}

	\noindent $\mathrm{(i)}$ 
		When $z_1 = z_2$, 
			there are two equilibrium solutions: 
			$$
				( \tau_{1,h}, \tau_{2,l} )
				~\mbox{satisfying}~\tau_{2,l} < \tau_{1,h};
				~~\mbox{and}~~
				( \tau_{1,l} , \tau_{2,h})
				~\mbox{satisfying}~\tau_{1,l} < \tau_{2,h}.
			$$
	\noindent $\mathrm{(ii)}$ When $z_1 < z_2$, 
		\begin{itemize}
			\item[-] if $z_{2,l} \leq z_1 < z_2 \leq z_{1,h}$, there are two equilibrium solutions:
			$$
			( \tau_{1,h}, \tau_{2,l} )
			~\mbox{satisfying}~\tau_{2,l} < \tau_{1,h};
			~~\mbox{and}~~
			( \tau_{1,l} , \tau_{2,h})
			~\mbox{satisfying}~\tau_{1,l} < \tau_{2,h}.
			$$
			\item[-]   otherwise, there is only one equilibrium solution $ ( \tau_{1,l} , \tau_{2,h}) ~\mbox{satisfying}~\tau_{1,l} < \tau_{2,h}$. 
		\end{itemize}
	
	\noindent $\mathrm{(iii)}$  When $z_2 < z_1$, 	
	\begin{itemize}
		\item[-] if $z_{1,l} \leq z_2 < z_1 \leq z_{2,h}$, there are two equilibrium solutions:
		$$
		( \tau_{1,h}, \tau_{2,l} )
		~\mbox{satisfying}~\tau_{2,l} < \tau_{1,h};
		~~\mbox{and}~~
		( \tau_{1,l} , \tau_{2,h})
		~\mbox{satisfying}~\tau_{1,l} < \tau_{2,h}.
		$$
		\item[-] otherwise, there is only one equilibrium solution $( \tau_{1,h}, \tau_{2,l}) ~\mbox{satisfying}~\tau_{2,l} < \tau_{1,h}$. 
	\end{itemize}

	\noindent $\mathrm{(iv)}$ Finally, for the Nash equilibrium $ ( \tau_{1,l} , \tau_{2,h})$, one obtains the total social utility:
	\begin{align*}
		u(x_0) = u_1(x_0)
			~:=~& 
			\big (-2\pi_1 \lambda_1 b z_{1,l} - a \ell (1-\lambda_2^{\gamma})  z_{1,l}^{\gamma} \big) \big( \frac{x_0}{z_{1,l}} \big)^m 
			+
			\big(-2\pi_2 \lambda_2 b z_{2,h} - a \ell \lambda_2^{\gamma}  z_{2,h}^{\gamma} \big) \big(\frac{x_0}{z_{2,h}} \big)^m \\
			&+ a \ell x_0^{\gamma} + (\pi_1 \lambda_1 +\pi_2 \lambda_2) b x_0;
	\end{align*}
	and for the Nash equilibrium $( \tau_{2,l}, \tau_{1,h})$, the total social utility is
	\begin{align*}
			u(x_0) = u_2(x_0) 
			~:=~&
			\big( - 2\pi_2 \lambda_2 b z_{2,l} - a \ell (1-\lambda_1^{\gamma}) z_{2,l}^{\gamma} \big) \big( \frac{x_0}{z_{2,l}}  \big)^m 
			+
			\big( -2 \pi_1 \lambda_1 b {z}_{1,h}- a \ell \lambda_1^{\gamma} {z}_{1,h}^{\gamma} \big) 
			\big( \frac{x_0}{{z}_{1,h}} \big)^m \\
			&+ a \ell x_0^{\gamma} + (\pi_1 \lambda_1 +\pi_2 \lambda_2) b x_0 .
	\end{align*}
\end{Theorem}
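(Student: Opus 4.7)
The proof proceeds by best-response analysis for each regulator given that the other uses a hitting time, followed by verification of the Nash equilibrium conditions for each candidate pair. Fix $\tau_j = \inf\{t : X_t \geq z\}$ for some $z > x_0$ and consider Regulator $i$'s problem. Using the identity $\pi_i \lambda_i b x_0 = \E[\int_0^\infty e^{-\rho t}\pi_i\lambda_i X_t\,dt]$, valid since $\rho>\mu$, the profit-minus-compensation term in \eqref{eq:pb_regulator1}--\eqref{eq:pb_regulator2} rewrites as $2\E[\int_0^{\tau_i}e^{-\rho t}\pi_i\lambda_i X_t\,dt] - \pi_i\lambda_i b x_0$. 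Since $\lambda_1+\lambda_2 = 1$, the damage integrand in \eqref{eq:def_Damage2} equals $\ell X_t^\gamma$ before the first exit, $\ell\lambda_j^\gamma X_t^\gamma$ between the two exits, and $0$ afterwards. Splitting the strategies of Regulator $i$ according to $\tau_i \leq \tau_j$ versus $\tau_i \geq \tau_j$ thus decomposes the problem into an \emph{exit-first} sub-problem with running payoff $2\pi_i\lambda_i x - \tfrac{\ell}{2}(1-\lambda_j^\gamma) x^\gamma$ on $[0,\tau_i]$, and an \emph{exit-second} sub-problem whose only $\tau_i$-dependent part is the integral of $2\pi_i\lambda_i x - \tfrac{\ell}{2}\lambda_i^\gamma x^\gamma$ on $[\tau_j,\tau_i]$.

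Each sub-problem is a classical one-dimensional optimal stopping for geometric Brownian motion with pure running payoff of the form $Ax - Bx^\gamma$. Applying Itô to $e^{-\rho t}X_t^\alpha$ yields $\E[e^{-\rho\tau_z}X_{\tau_z}^\alpha] = x_0^m z^{\alpha-m}$ for $\alpha\in\{1,\gamma\}$, together with $\E[\int_0^{\tau_z}e^{-\rho t}X_t\,dt] = b(x_0 - x_0^m z^{1-m})$ and $\E[\int_0^{\tau_z}e^{-\rho t}X_t^\gamma\,dt] = a(x_0^m z^{\gamma-m} - x_0^\gamma)$. Optimising the resulting expression over the hitting level via a standard first-order condition produces the unconstrained optima $z_{i,l}$ and $z_{i,h}$ of \eqref{eq:def_zlh}. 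The hypothesis $x_0 < \min(z_{1,l},z_{2,l})$ together with $\mu>\sigma^2/2$ guarantees these thresholds are reached almost surely and that the unconstrained optima are attained whenever the constraint ($z_{i,l}\le z$ or $z\le z_{i,h}$) is compatible.

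The crux is to compare the two sub-problem values $V_{a,i}(x_0;z)$ and $V_{b,i}(x_0;z)$. Writing them out with the identities above and subtracting, the $x_0^\gamma$ contributions of the damage integrals cancel and one is left with
\[
V_{a,i}(x_0;z) - V_{b,i}(x_0;z) \;=\; x_0^m\Big(C_i + \tfrac{a\ell}{2}(1-\lambda_1^\gamma-\lambda_2^\gamma)\,z^{\gamma-m}\Big),
\]
where $C_i$ depends only on $z_{i,l},z_{i,h}$ and the model parameters, but not on $z$ or $x_0$. Solving $C_i + \tfrac{a\ell}{2}(1-\lambda_1^\gamma-\lambda_2^\gamma)z^{\gamma-m} = 0$ recovers exactly the definition of $z_i$ in \eqref{eq:def_z}. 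Since $1-\lambda_1^\gamma-\lambda_2^\gamma>0$, Regulator $i$'s best response to $\tau_j$ at level $z$ is $\tau_{i,l}$ when $z>z_i$ and $\tau_{i,h}$ when $z<z_i$, with indifference at $z=z_i$.

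The Nash conditions then follow mechanically. The pair $(\tau_{1,l},\tau_{2,h})$ is an equilibrium iff $z_{2,h}>z_1$ (so Regulator 1 prefers exiting first) and $z_{1,l}<z_2$ (so Regulator 2 prefers exiting second); the ordering $\tau_{1,l}<\tau_{2,h}$ is then automatic from $z_{1,l}<z_1<z_{2,h}$. Symmetrically, $(\tau_{1,h},\tau_{2,l})$ is an equilibrium iff $z_{1,h}>z_2$ and $z_{2,l}<z_1$. Combining these two pairs of inequalities with the chain $z_{i,l}<z_i<z_{i,h}$ reproduces precisely the three cases in parts (i)--(iii). For part (iv), summing $J_1+J_2$ at the equilibrium times and reapplying the same hitting-time identities yields $u_1(x_0)$ or $u_2(x_0)$: the $a\ell x_0^\gamma$ term comes from the joint damage integral over $[0,\min(\tau_1,\tau_2)]$, the $(\pi_1\lambda_1+\pi_2\lambda_2)bx_0$ term from the profit identity applied to each firm, and the two $(x_0/z_{*,*})^m$ factors from the discounted exit events. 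The main obstacle is the algebraic bookkeeping in the third paragraph: showing that the $z$-independent constant $C_i$ has exactly the numerator appearing in \eqref{eq:def_z}, so that $z_i$ is independent of $x_0$ and the whole analysis reduces to comparing four scalars; a secondary technical point is the degenerate exit-second case when $z\ge z_{i,h}$, which forces $\tau_i=\tau_j$ and which one must check is always dominated by the exit-first branch under the hypotheses of each equilibrium case.
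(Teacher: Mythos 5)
Your proposal is correct and follows essentially the same route as the paper: you fix the opponent's hitting level, split each regulator's problem into an exit-first and an exit-second branch (the paper's $w_{i,1}$, $w_{i,2}$ in Propositions \ref{prop:opt_stopping_1}--\ref{prop:opt_stopping_2}), solve each branch as a one-threshold stopping problem yielding $z_{i,l}$ and $z_{i,h}$, and locate the switch between branches at $z_i$ exactly as in \eqref{eq:def_z}, after which the case analysis in (i)--(iii) and the utility computation in (iv) follow from the same hitting-time identities. The only cosmetic difference is that the paper converts the running payoffs to terminal payoffs via It\^o before optimizing, whereas you keep the running-payoff form and use the integral identities directly; the algebra is identical.
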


	\begin{Remark} \label{rem:spec}{\rm 
	Under the special condition that  $\gamma \ge 2$ and
	$$
		\pi_i = \alpha \lambda_i, ~i=1,2, ~\alpha > 0 
		~~\mbox{so that}~
		\lambda_1 < \lambda_2 \text{ and } 0 < \lambda_1 < \frac{1}{2},
	$$
	it always holds that
	$$
		z_{1,l} < z_{2,l} \text{ and } z_{2,h} < z_{1,h}.
	$$
	Assume in addition that $\gamma = 2$, then one has $z_1 < z_2 $, and hence there are only two cases:
	\begin{itemize}
		\item If $z_{1, l} < z_1 < z_{2, l} < z_2 < z_{2,h} < z_{1,h}$, there is only one equilibrium solution $ ( \tau_{1,l} , \tau_{2,h})$. 
		
		\item If $z_{1,l} < z_{2,l} \le z_1 < z_2 < z_{2,h} < z_{1,h}$, there are two equilibrium solutions $(\tau_{1,h}, \tau_{2,l})$ and $ ( \tau_{1,l} , \tau_{2,h})$.
	\end{itemize}
}
	\end{Remark}

\subsection{The setting with one regulator}

	We next study the problem with only one regulator for the two firms.
	The regulator can either apply individual compensation contract to incentivise the two firms to quit, or apply uniform contract to incentivise the two firms as presented in Section \ref{subsec:model}, 
	or decides the exit time of each firm without any compensation.

\subsubsection{The case with individual compensation scheme}

	As discussed in Section \ref{subsec:Nash}, in the case of individual contract, the regulator needs to pay $Z^i_{\tau_i} = \pi_i \lambda_i b X_{\tau_i}$ to incentivise Firm $i$ to exit at time $\tau_i$.
	Due to technical reasons, we will only consider the hitting times in $\Tc$, i.e.
	$$
		\Tc_0 ~:=~ \big\{ \tau \in \Tc ~: \tau = \inf \big\{ t \ge 0 ~: X_t \ge x \} ~\mbox{for some}~x> 0 \big\}
		\big\}.
	$$
	Then the regulator's problem becomes
	\begin{align}  \label{eq:individual_compen}
		\hat u(x_0) 
		~:=
		\sup_{\tau_1, \tau_2 \in \Tc_0}  
		\E \Big[ \sum_{i=1}^2 \int_0^{\tau_i}  e^{- \rho t}  \pi_i \lambda_i X_t dt - e^{-\rho \tau_i} Z^i_{\tau_i} \Big] 
		-
		D(x_0, \tau_1, \tau_2),
	\end{align}
	where $D(x_0, \tau_1, \tau_2)$ is the total social damage defined in \eqref{eq:def_Damage2}.

	\vspace{0.5em}
	
	Recall that $z_i$, $z_{i,l}$, $z_{i,h}$ for $i=1,2$ are defined in \eqref{eq:def_zlh} and \eqref{eq:def_z},
	let 
	$$
		\big( \hat z_i, ~\hat z_{i,l}, ~\hat z_{i,h} \big) 
		~:=~
		2^{ - \frac{1}{\gamma - 1}} ~\big(z_i, ~z_{i,l}, ~z_{i,h} \big), 
		~~i=1,2,
	$$
	and 
	$$
		\hat \tau_{i,l} := \inf \big \{ t \ge 0 ~: X_t \ge \hat z_{i,l} \big\},
		~~
		\hat \tau_{i,h} := \inf \big \{ t \ge 0 ~: X_t \ge \hat z_{i,h} \big\},
		~~ i = 1,2.
	$$
	
	 \begin{Proposition} \label{prop:indiv_contract}
	 	Assume that $x_0 < \min (\hat z_{1,l}, \hat z_{2,l})$.
	 
	 	\vspace{0.5em}
		\noindent $\mathrm{(i)}$ When ${\hat z_1} < {\hat z_2}$, the optimal solution of the problem \eqref{eq:individual_compen} is given by $({\hat \tau_{1,l}}, {\hat \tau_{2,h}})$,
		and the total social utility is
			\begin{align*}
				\hat u(x_0) = \hat u_1(x_0) 
				:=&
				\big (-2\pi_1 \lambda_1 b \hat z_{1,l} - a \ell (1-\lambda_2^{\gamma}) \hat z_{1,l}^{\gamma} \big) \big( \frac{x_0}{\hat z_{1,l}} \big)^m 
				+
				\big(-2 \pi_2 \lambda_2 b \hat z_{2,h} - a \ell \lambda_2^{\gamma} \hat z_{2,h}^{\gamma} \big) \big(\frac{x_0}{\hat z_{2,h}} \big)^m \\
				&+ a \ell x_0^{\gamma} + (\pi_1 \lambda_1 +\pi_2 \lambda_2) b x_0.
			\end{align*}

		\noindent $\mathrm{(ii)}$ When ${ \hat z_2} < { \hat z_1}$, the optimal solution of the problem \eqref{eq:individual_compen} is given by $({ \htau_{1,h}}, { \htau_{2,l}})$,
		and the total social utility is
			\begin{align*}
				\hat u(x_0) = \hat u_2(x_0) 
				:=&
				\big( - 2\pi_2 \lambda_2 b \hat{z}_{2,l} - a \ell (1-\lambda_1^{\gamma}) \hat{z}_{2,l}^{\gamma} \big) \big( \frac{x_0}{\hat{z}_{2,l}} \big)^m 
				+
				\big( -2 \pi_1 \lambda_1 b \hat{z}_{1,h}- a \ell \lambda_1^{\gamma}\hat{z}_{1,h}^{\gamma} \big) 
				\big( \frac{x_0}{\hat{z}_{1,h}} \big)^m \\
				&+ a \ell x_0^{\gamma} + (\pi_1 \lambda_1 +\pi_2 \lambda_2) b x_0 .
			\end{align*}

	 \end{Proposition}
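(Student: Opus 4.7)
The plan is to reduce problem \eqref{eq:individual_compen} to a pair of sequential single-firm problems, one per ordering of $\tau_1,\tau_2$. Since both are hitting times of constant levels by the continuous process $X$ and $x_0<\min(\hat z_{1,l},\hat z_{2,l})$, writing $\tau_i=\inf\{t\ge 0:X_t\ge z_i\}$ the event $\{\tau_1\le\tau_2\}$ coincides a.s.\ with $\{z_1\le z_2\}$. The problem thus becomes a deterministic optimization over $(z_1,z_2)\in(0,\infty)^2$, which I split into the two subregions $z_1\le z_2$ and $z_2\le z_1$ and handle separately.

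Consider first $z_1\le z_2$. By the strong Markov property of $X$ at $\tau_1$, once firm $1$ has exited, aggregate production drops from $X_t$ to $\lambda_2 X_t$, so firm $2$'s continuation problem decouples into an autonomous single-firm stopping problem with effective profit $\pi_2\lambda_2 X_t$, damage $\ell\lambda_2^\gamma X_t^\gamma$ and compensation $\pi_2\lambda_2 b X_{\tau_2}$. This is precisely the set-up of Theorem \ref{theo:Nfirms} with $N=1$; its optimal threshold is $\hat z_{2,h}$ and its value is
\begin{align*}
V^{(2)}(z) \,=\, \pi_2\lambda_2 b z + a\ell\lambda_2^\gamma z^\gamma + \bigl(-2\pi_2\lambda_2 b \hat z_{2,h} - a\ell\lambda_2^\gamma \hat z_{2,h}^\gamma\bigr)\bigl(z/\hat z_{2,h}\bigr)^m,\quad z<\hat z_{2,h}.
\end{align*}
Inserting $V^{(2)}$ as terminal reward into firm $1$'s outer problem, whose running payoff before $\tau_1$ is $(\pi_1\lambda_1+\pi_2\lambda_2) X_t-\ell X_t^\gamma$ (since $\lambda_1+\lambda_2=1$) and terminal reward is $V^{(2)}(X_{\tau_1})-\pi_1\lambda_1 b X_{\tau_1}$, I apply the standard GBM hitting identities $\E[e^{-\rho\tau_z}]=(x_0/z)^m$, $\E[\int_0^{\tau_z}e^{-\rho t}X_t\,dt]=b(x_0-z^{1-m}x_0^m)$, and $\E[\int_0^{\tau_z}e^{-\rho t}X_t^\gamma\,dt]=a(z^{\gamma-m}x_0^m-x_0^\gamma)$ to obtain the objective as an explicit function of $z_1$. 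Its first-order condition admits the unique interior root $z_1=\hat z_{1,l}$, yielding the value $\hat u_1(x_0)$. The reversed ordering $z_2\le z_1$ is treated symmetrically, producing thresholds $(\hat z_{1,h},\hat z_{2,l})$ and value $\hat u_2(x_0)$.

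The main obstacle is the global comparison: the optimum is $\hat u(x_0)=\max(\hat u_1(x_0),\hat u_2(x_0))$, and the sign of $\hat u_1(x_0)-\hat u_2(x_0)$ must be shown to agree with $\hat z_1\lessgtr\hat z_2$. The rescaling $\hat z_i=2^{-1/(\gamma-1)} z_i$ aligns this setting with the Nash problem of Theorem \ref{thm:NashE}, the factor of two arising because the single regulator internalizes the full damage rather than a half, which doubles the effective damage coefficient in each sub-problem. By construction, the quantities $z_1,z_2$ in \eqref{eq:def_z} encode the indifference identity between the two orderings in the Nash setting, and this identity survives the damage rescaling. Concretely, in $\hat u_1(x_0)-\hat u_2(x_0)$ the common terms $a\ell x_0^\gamma+(\pi_1\lambda_1+\pi_2\lambda_2)b x_0$ cancel and the remainder, a combination of $x_0^m$-monomials, collapses through \eqref{eq:def_z} to a single factor proportional to $\hat z_1^{\gamma-m}-\hat z_2^{\gamma-m}$, whose sign coincides with that of $\hat z_1-\hat z_2$.

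Finally, to promote the threshold policy into a genuine optimum among $(\tau_1,\tau_2)\in\Tc_0^2$, I note that restricting to hitting times makes the problem a smooth finite-dimensional optimization over $(z_1,z_2)\in(0,\infty)^2$: the two candidates are the unique interior critical points in their respective subregions, and the objective decays at the boundary ($z_i\to 0^+$ gives away the firm's future profit for free, while $z_i\to\infty$ incurs unbounded damage), which rules out corner maxima. The global maximum is therefore the larger of $\hat u_1(x_0)$ and $\hat u_2(x_0)$, selected by the sign analysis above.
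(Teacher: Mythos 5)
Your route is correct in substance but genuinely different from the paper's. The paper first observes that any maximizer of the social objective \eqref{eq:individual_compen} must be a Nash equilibrium of the auxiliary game with payoffs $K_i := J_i - \tfrac12 D$ (each player internalizing the \emph{full} damage, which is exactly what produces the rescaling $\hat z_{i,\cdot} = 2^{-1/(\gamma-1)} z_{i,\cdot}$), then reuses the best-response machinery of Propositions \ref{prop:opt_stopping_1}--\ref{prop:opt_stopping_2} to enumerate the hitting-time equilibria — exactly the two candidates $(\hat\tau_{1,l},\hat\tau_{2,h})$ and $(\hat\tau_{1,h},\hat\tau_{2,l})$ — and finally compares their values. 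You instead optimize the joint objective directly: split by the ordering of the thresholds, solve each ordering by backward induction (the inner single-firm problem, whose value enters the outer problem only through an affine-in-$X^m$ term that does not move the outer threshold, by Proposition \ref{prop:OT_solution}), and compare. Your comparison step is actually more explicit than the paper's ``by comparing the optimal reward'': the identity you invoke does hold — one checks that $\hat u_1(x_0)-\hat u_2(x_0) = a\ell\,(1-\lambda_1^\gamma-\lambda_2^\gamma)\,(\hat z_2^{\gamma-m}-\hat z_1^{\gamma-m})\,x_0^m$, using that $\hat z_i = 2^{-1/(\gamma-1)} z_i$ satisfies the indifference relation \eqref{eq:def_z} with $4\pi_i\lambda_i b$ replaced by $2\pi_i\lambda_i b$ (the exponents $1-m$ and $\gamma-m$ rescale compatibly precisely because $(\gamma-m)-(1-m)=\gamma-1$). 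Since $\gamma>m$ and $\lambda_1^\gamma+\lambda_2^\gamma<1$, this gives $\hat u_1>\hat u_2$ iff $\hat z_1<\hat z_2$, as required.

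The one soft spot is your boundary analysis. Within the region $\{z_1\le z_2\}$ the inner optimum $\hat z_{2,h}$ is feasible only when $z_1\le \hat z_{2,h}$; for $z_1>\hat z_{2,h}$ the constrained inner optimum sits on the diagonal $z_2=z_1$ (simultaneous exit), and your decay argument at $z_i\to 0^+$ and $z_i\to\infty$ does not rule out a maximum there. You should either check directly that the diagonal branch is dominated, or note that in each case of the proposition the interior critical point is feasible ($\hat z_{1,l}<\hat z_1<\hat z_2<\hat z_{2,h}$ in case (i), and symmetrically in case (ii)) and then verify that the value along the diagonal never exceeds $\max(\hat u_1,\hat u_2)$. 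The paper's Nash reduction disposes of this automatically, since a simultaneous-exit point would have to be a mutual best response in the full-damage game, which the best-response characterization excludes.
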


\subsubsection{The case of uniform compensation scheme}

	As comparison, let us consider the case of uniform compensation scheme as studied in Section \ref{sec:one-country}  (where $\pi_1 < \pi_2$).
	Applying the results in Section \ref{subsec:1market}, in case the regulator applies  the same compensation scheme to both firms $i=1, 2$, the optimal exit time of the each firm are given by
	$\htau_1$ and $\htau_2$ which are hitting times of the threshold $\hat x_1$ and $\hat x_2$, where 
	$$
		\hat x_1 := \Big( \frac{m-1}{\gamma-m} \frac{b}{\ell a} \frac{2\pi_1 \lambda_1 } {1- {\lambda}_{2}^{\gamma} } \Big)^{1/({\gamma}-1)},
		~~~
		\hat x_2 := \Big( \frac{m-1}{\gamma-m} \frac{b}{\ell a} \frac{\pi_2(1+\lambda_2) - \pi_{1} {\lambda}_{1} } {{\lambda}_{2}^{\gamma} } \Big)^{1/({\gamma}-1)}
	$$
	Moreover, the total utility is given by
	\begin{align*}
	v(x_0) = &  \big(  -2 \pi_1  \lambda_1  b \hat x_1  - a \ell (1 - \lambda_{2}^\gamma \big) \hat x_1^\gamma \big) \big( \frac{x_0}{\hat x_1}\big)^m + \big(  -2 \lambda_2 \pi_2  b \hat x_2  - \lambda_1 (\pi_2 - \pi_1) b \hat x_2 -  a \ell \lambda_{2}^\gamma  \hat x_2^\gamma \big) \big( \frac{x_0}{\hat x_2}\big)^m  \\ 
    & + a \ell x_0^{\gamma} + (\pi_1 \lambda_1 +\pi_2 \lambda_2) b x_0  ,
	\end{align*}

	\begin{Remark}{\rm 
	\noindent $\mathrm{(i)}$ When \( \hat z_1 < \hat z_2 \) (or equivalently $z_1 < z_2$), in the case of individual compensation scheme, there is only one optimal solution: Firm 1 exits first at time \( \hat \tau_{1,l} \) and Firm 2 exits next at \( \hat \tau_{2,h} \). 
	At the same time, we have
$$
\hat z_{1,l} = \hat x_1 < \hat z_{2,h} < \hat x_2 \quad \text{and} \quad \hat u_1(x_0) > v(x_0),
$$
which indicates that Firm 1 exits the market at the same time under both the individual and uniform compensation schemes, while Firm 2 exits later in the uniform compensation case. Therefore, the uniform compensation scheme requires more time for both firms to exit the market and results in lower profits, making it less efficient than the individual compensation case.

	\vspace{0.5em}

\noindent $\mathrm{(ii)}$ When \( \hat z_1 > \hat z_2 \) (or equivalently $z_1 > z_2$), in the case of individual compensation scheme, there is unique optimal solution is: 
	Firm 2 exits first at \( \hat \tau_{2,l} \) and Firm 1 exits next  at \( \hat \tau_{1,h} \). 
	One has
$$
\hat \tau_{2,l} < \hat \tau_{1,h}, \quad \hat z_{2,l} < \hat x_2, \quad \hat x_1 < \hat z_{1,h}, \quad \text{and} \quad \hat u_2(x_0) > v(x_0).
$$
Compared to the individual compensation case, Firm 1 exits the market earlier under the uniform compensation scheme, while Firm 2 exits later. 

	\vspace{0.5em}

Additionally, if \( \lambda_1 > 0.5 \), then \( \hat z_{1,h} < \hat x_2 \).
	Otherwise, it is possible that \( \hat z_{1,h} \geq \hat x_2 \) for some  parameter \( \pi_1, \, \pi_2 \).
	For example, with \( \lambda_1 = 0.1 \), \( \gamma = 2 \), \( \pi_1 = 1 \) and \( \pi_2 = 5 \), one has 
	$$\hat z_{1,h} ~=~ 20 \frac{m-1}{2-m} \frac{b}{\ell a}  ~>~ \hat x_2 ~=~ 12   \frac{m-1}{2-m} \frac{b}{\ell a}.$$
	In this case, the time required for all firms to exit in the individual compensation case is not necessarily less than in the uniform compensation case.
	However, the total utility of individual compensation scheme remains higher than in the uniform compensation case.
	}
	\end{Remark}

\subsubsection{The social optimum without compensation}

	We next consider the case where the regulator can decide the exit time of each firm without given any compensation.
	Again, we will only consider hitting times in this setting, so that the regulator's optimal stopping problem becomes
	\begin{align} \label{eq:central_2}
		\bar u(x_0) 
		~:=
		\sup_{\tau_1, \tau_2 \in \Tc_0}
		\E \Big[ \int_0^{\tau_1}  e^{- \rho t}  \pi_1 \lambda_1 X_t dt + \int_0^{\tau_2}  e^{- \rho t}  \pi_2 \lambda_2 X_t dt \Big] - D(x_0, \tau_1, \tau_2),
	\end{align}
	where the  $D_0(x_0, \tau_1, \tau_2)$ is the total social damage defined in \eqref{eq:def_Damage2}.
	
	\vspace{0.5em}
	
	Let us define
	$$
		\big(\bar  z_i,~ \bar z_{i,l}, ~\bar z_{i,h} \big) 
		~:=~
		4^{ - \frac{1}{\gamma - 1}} ~\big(  z_i, ~z_{i,l}, ~z_{i,h} \big),
		~~i=1,2,
	$$
	where $z_i$, $z_{i,l}$, $z_{i,h}$ for $i=1,2$ are defined in \eqref{eq:def_x1_p} and \eqref{eq:def_x_1},
	and then define
	$$
		\bar \tau_{i,l} := \inf \big \{ t \ge 0 ~: X_t \ge \bar z_{i,l} \big\},
		~~
		\bar \tau_{i,h} := \inf \big \{ t \ge 0 ~: X_t \ge \bar z_{i,h} \big\},
		~~ i = 1,2.
	$$

	 \begin{Proposition}
	 	Assume that $x_0 < \min( \bar z_{1,l}, \bar z_{2,l})$.
		
		\vspace{0.5em}
	 
	 	\noindent $\mathrm{(i)}$ 
		When ${\bar z_1} < { \bar z_2}$, the optimal solution of the problem \eqref{eq:central_2} is given by $({\bar \tau_{1,l}}, {\bar \tau_{2,h}})$,
		and the total social utility is
			\begin{align*}
				\bar u(x_0) = \bar u_1(x_0) 
				~:=~&
				\big (-\pi_1 \lambda_1 b \bar z_{1,l} - a \ell (1-\lambda_2^{\gamma}) \bar z_{1,l}^{\gamma} \big) \big( \frac{x_0}{\bar z_{1,l}} \big)^m 
				+
				\big(-\pi_2 \lambda_2 b \bar z_{2,h} - a \ell \lambda_2^{\gamma} \bar z_{2,h}^{\gamma} \big) \big(\frac{x_0}{\bar z_{2,h}} \big)^m  \\
				&+ a \ell x_0^{\gamma} + (\pi_1 \lambda_1 +\pi_2 \lambda_2) b x_0.
			\end{align*}

		\noindent $\mathrm{(ii)}$
		When ${\bar z_2} < {\bar  z_1}$, the optimal solution of the problem \eqref{eq:central_2} is given by $({\bar \tau_{1,h}}, {\bar  \tau_{2,l}})$,
		and the total social utility is
			\begin{align*}
				\bar u(x_0) = \bar u_2(x_0) 
				~:=~&
				\big( - \pi_2 \lambda_2 b \bar{z}_{2,l} - a \ell (1-\lambda_1^{\gamma}) \bar{z}_{2,l}^{\gamma} \big) \big( \frac{x_0}{\bar{z}_{2,l}} \big)^m 
				+
				\big(-\pi_1 \lambda_1 b \bar{z}_{1,h}- a \ell \lambda_1^{\gamma}\bar{z}_{1,h}^{\gamma} \big) 
				\big( \frac{x_0}{\bar{z}_{1,h}} \big)^m \\
				&+ a \ell x_0^{\gamma} + (\pi_1 \lambda_1 +\pi_2 \lambda_2) b x_0 .
			\end{align*}
	 \end{Proposition}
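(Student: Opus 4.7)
Since we restrict to $\Tc_0$, write $\tau_i = \inf\{t \ge 0 : X_t \ge \xi_i\}$ for thresholds $\xi_i > 0$. By symmetry between the two firms' roles, it suffices to analyze the two regimes $\xi_1 < \xi_2$ (Firm~1 exits first) and $\xi_2 < \xi_1$ (Firm~2 exits first) separately, and then to compare the resulting values. I treat the first regime in detail; the second is analogous.

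The building blocks are the Laplace transform identity $\E[e^{-\rho \tau_\xi}] = (x_0/\xi)^m$ for $x_0 < \xi$, together with the It\^o-based identities
\begin{align*}
\E\Big[\int_0^{\tau_\xi} e^{-\rho t} X_t \dd t\Big] = b\bigl(x_0 - \xi^{1-m} x_0^m\bigr), \qquad \E\Big[\int_0^{\tau_\xi} e^{-\rho t} X_t^\gamma \dd t\Big] = a\bigl(\xi^{\gamma-m} x_0^m - x_0^\gamma\bigr),
\end{align*}
obtained by applying It\^o's formula to $e^{-\rho t} X_t$ and $e^{-\rho t} X_t^\gamma$ and using $\rho>\mu$ and $\rho < \gamma\mu + \tfrac12 \sigma^2 \gamma(\gamma-1)$. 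Splitting the damage $D(x_0, \tau_1, \tau_2)$ from \eqref{eq:def_Damage2} according to whether both, only firm~2, or neither firm is still active (using $\lambda_1+\lambda_2 = 1$), and invoking the strong Markov property at $\tau_1$ to re-express the middle piece in terms of $\xi_1$ and $\xi_2$, the objective becomes the decoupled function
\begin{align*}
\bar u(x_0) = (\pi_1\lambda_1 + \pi_2\lambda_2)\, b x_0 + a\ell x_0^\gamma + x_0^m \bigl[F_1(\xi_1) + F_2(\xi_2)\bigr],
\end{align*}
with $F_1(\xi_1) := -\pi_1\lambda_1 b \,\xi_1^{1-m} - a\ell(1-\lambda_2^\gamma)\, \xi_1^{\gamma-m}$ and $F_2(\xi_2) := -\pi_2\lambda_2 b \,\xi_2^{1-m} - a\ell \lambda_2^\gamma\, \xi_2^{\gamma-m}$.

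The two maximizations decouple, and since $1 < m < \gamma$ each $F_i$ is strictly concave on $(0, \infty)$, so the first-order conditions identify the unique global maxima
\begin{align*}
\xi_1^* = \Bigl(\tfrac{m-1}{\gamma-m}\tfrac{\pi_1\lambda_1 b}{a\ell(1-\lambda_2^\gamma)}\Bigr)^{1/(\gamma-1)} = \bar z_{1,l}, \qquad \xi_2^* = \Bigl(\tfrac{m-1}{\gamma-m}\tfrac{\pi_2 \lambda_2 b}{a\ell \lambda_2^\gamma}\Bigr)^{1/(\gamma-1)} = \bar z_{2,h}.
\end{align*}
Substitution recovers exactly the announced formula for $\bar u_1$, and the symmetric analysis of the regime $\xi_2 < \xi_1$ yields the candidate $\bar u_2$ with optimal thresholds $(\bar z_{2,l}, \bar z_{1,h})$.

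It remains to compare the two orderings to pick the global optimum. Using the first-order condition to substitute $a\ell(1-\lambda_2^\gamma)\, \bar z_{1,l}^{\gamma-m} = \tfrac{m-1}{\gamma-m}\pi_1\lambda_1 b \, \bar z_{1,l}^{1-m}$ (and analogously at $\bar z_{2,h}$) gives the compact expression $F_1^* = -\tfrac{\gamma-1}{\gamma-m}(\pi_1\lambda_1 b\, \bar z_{1,l}^{1-m} + \pi_2\lambda_2 b\, \bar z_{2,h}^{1-m})$, and symmetrically for $F_2^*$. A parallel manipulation of the defining identity \eqref{eq:def_z} for $z_i$, combined with the scaling $\bar z_\bullet = 4^{-1/(\gamma-1)} z_\bullet$ and with the FOC characterizations of $z_{i,l}, z_{i,h}$, produces the clean first-best identity
\begin{align*}
a\ell(1-\lambda_1^\gamma-\lambda_2^\gamma)\, \bar z_i^{\gamma-m} = \tfrac{\gamma-1}{\gamma-m}\, \pi_i\lambda_i b \bigl(\bar z_{i,l}^{1-m} - \bar z_{i,h}^{1-m}\bigr),
\end{align*}
which shows that $\bar u_1 - \bar u_2$ is a positive multiple of $\bar z_2^{\gamma-m} - \bar z_1^{\gamma-m}$. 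Since $\gamma > m$, we conclude $\bar u = \bar u_1$ iff $\bar z_1 < \bar z_2$, matching cases (i) and (ii). The main obstacle is the careful damage decomposition across the three active-firm regimes together with the strong Markov step, and the algebraic verification that the scaling $4^{-1/(\gamma-1)}$ applied to $z_i$ (whose defining exponent is $1/(\gamma-m)$, not $1/(\gamma-1)$) nevertheless yields the above identity with no spurious prefactor; feasibility of each ordering, namely $\bar z_{1,l}<\bar z_{2,h}$ in case (i) and $\bar z_{2,l}<\bar z_{1,h}$ in case (ii), should be noted as a by-product of the computation.
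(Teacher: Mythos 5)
Your proposal is correct and follows essentially the route the paper takes for the analogous results (Propositions \ref{prop:opt_stopping_1}--\ref{prop:indiv_contract}): reduce to hitting times, use $\E[e^{-\rho\tau_\xi}]=(x_0/\xi)^m$ and the It\^o identities to decouple the objective into threshold functions, optimize each by first-order conditions, and compare the two orderings via the defining identity for $z_i$; your scaling computation $4\cdot 4^{(1-m)/(\gamma-1)}=4^{(\gamma-m)/(\gamma-1)}$ and the resulting sign of $\bar u_1-\bar u_2$ check out. Two small repairs: the claim that each $F_i$ is strictly concave can fail when $\gamma-m<1$ (the term $-B\xi^{\gamma-m}$ is then convex), but writing $F_i'(\xi)=\xi^{-m}\bigl[(m-1)A-(\gamma-m)B\xi^{\gamma-1}\bigr]$ shows $F_i$ is unimodal with a unique global maximum at the FOC point, which is all you need; and for the ordering comparison it should be said explicitly that in the regime whose unconstrained maximizer violates the constraint $\xi_i\le\xi_j$, the quantity $\bar u_2$ (resp.\ $\bar u_1$) is still an upper bound for that regime's constrained supremum, while feasibility in the selected regime is automatic from $\bar z_{i,l}<\bar z_i<\bar z_j<\bar z_{j,h}$.
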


	\begin{Remark}{\rm 
		For one regulator scenario with individual compensation and without compensation cases, we have
		$$
		\big( \bar z_{i,l}, ~\bar z_{i,h} \big) 
		~:=~
		2^{ - \frac{1}{\gamma - 1}} ~\big( \hat z_{i,l}, ~ \hat z_{i,h} \big), ~~ 
		\hat u_i(x_0) < \bar u_i(x_0).
		~~i=1,2,
		$$
		This is reasonable because the regulator requires more time to balance his/her interests and may need to sacrifice some interests in the compensation scenario.
		}
	\end{Remark}

	Next, we compare the scenario with two regulators to the cases with individual compensation and without compensation.
	\begin{Remark}{\rm 
	Consider the special condition where  $\gamma = 2$ and
		$$
		\pi_i = \alpha \lambda_i, ~i=1,2, ~\alpha > 0 
		~~\mbox{so that}~
		\lambda_1 < \lambda_2 \text{ and } 0 < \lambda_1 < \frac{1}{2}.
		$$
	\begin{itemize}
	\item[$\mathrm{(i)}$] When \( z_1 < z_2 \), the scenario with two regulators achieves a Nash equilibrium at Firm 1 with \( \tau_{1,h} \) and Firm 2 with \( \tau_{2,l} \). In the cases of individual compensation and without compensation, the optimal solutions are attained at \( (\hat{\tau}_{1,l}, \hat{\tau}_{2,h}) \) and \( (\bar{\tau}_{1,l}, \bar{\tau}_{2,h}) \), respectively. We have:
	$$
	\bar{z}_{1,l} < \hat{z}_{1,l} < z_{2,l}, \quad \bar{z}_{2,h} < \hat{z}_{2,h} < z_{1,h},
	$$
	$$
	u_2(x_0) < \hat{u}_1(x_0) < \bar{u}_1(x_0).
	$$
	\item[$\mathrm{(ii)}$] When \( z_1 < z_2 \), the case with two regulators reaches a Nash equilibrium at Firm 1 with \( \tau_{1,l} \) and Firm 2 with \( \tau_{2,h} \). The cases with individual compensation and without compensation obtains the optimal solution at \( (\hat{\tau}_{1,l}, \hat{\tau}_{2,h}) \) and \( (\bar{\tau}_{1,l}, \bar{\tau}_{2,h}) \) respectively. We find:
	$$
	\bar{z}_{1,l} < \hat{z}_{1,l}  < z_{1,l}, \quad \bar{z}_{2,h} < \hat{z}_{2,h} < z_{2,h}
	$$
	\[
	u_1(x_0) < \hat{u}_1(x_0) < \bar{u}_1(x_0).
	\]

	\item[$\mathrm{(iii)}$]When $z_1 > z_2$, the two regulators scenario achieves a Nash equilibrium at Firm 1 with $\tau_{1,l}$ and Firm 2 with $\tau_{2,h}$. The cases with individual compensation and without compensation achieve optimal solutions at $(\hat \tau_{1,h}, \hat \tau_{2,l})$ and $(\bar \tau_{1,h}, \bar \tau_{2,l})$, respectively. We have:
	$$
	 \bar z_{2,l} < \hat z_{2,l} <z_{1,l} , \quad \bar z_{1,h} <\hat z_{1,h} < z_{2,h}
	$$
	$$
	u_1(x_0) <\hat u_2(x_0) < \bar u_2(x_0).
	$$
	\item[$\mathrm{(iv)}$]When $z_1 > z_2$,  the two regulators case achieves a Nash equilibrium at Firm 1 with $\tau_{1,h}$ and Firm 2 with $\tau_{2,l}$. The cases with individual compensation and without compensation obtain optimal solution at $(\hat \tau_{1,h}, \hat \tau_{2,l})$ and $(\bar \tau_{1,h}, \bar \tau_{2,l})$, respectively. We find:
	$$
	 \bar z_{2,l} < \hat z_{2,l}  < z_{2,l}  ,  \quad \bar z_{1,h} <\hat z_{1,h} <z_{1,h}
	 $$
	 $$
	u_2(x_0) <\hat u_2(x_0) < \bar u_2(x_0).
	$$
\end{itemize}
	From the above comparison, we can see that the timing of the first exit for firms follows this order: one regulator without compensation, one regulator with individual compensation, and then the case with two regulators. The timing of the second exit for firms exhibits the same order. However, the utility is in the reverse order. This indicates that the scenario with one regulator without compensation consistently leads to the earliest exit for firms and maximizes utility, making it the most efficient case. Conversely, the scenario with two regulators results in the latest exits and the lowest utility, likely due to the competition between the two regulators causing additional losses.
	}
	\end{Remark}

	\begin{Remark}{\rm 
	Theoretically, we can extend our analysis to multiple (more than two) interacting markets for all models in this section and apply the same approach. However, since the two interacting market cases effectively illustrate the main ideas and provide clear results, further discussion would involve tedious repetition. Therefore, we will not address that.
	}
	\end{Remark}

\section{Economic application}
\label{sec:eco_application}

To illustrate the consequences of the above incentive mechanism to exit the market, we consider the crude oil industry. Its production is roughly 100~Mb/d (million barils per day). If no change is done to curb its production, we can assume an average growth rate of $\mu =2\%$ per year and a volatility of $\sigma = 0.08$.  We assume an average price of $50$ USD/baril. We assume a damage paramater of $\gamma=4$, which is large. And we set $\ell$ so that the damages are already outweighting the profit, i.e set $\ell$ such that $p X_0 - \ell X_0^{\gamma} = 0$ where $X_0 =$ 100~Mb/d. We will assume that the market shares and the profits are given by the relation $\pi_i = \alpha \lambda_i$ and that $\lambda_i = K_n/(N+1-i)^\theta$. We first consider the case of a single market served by $N$ firms and then, consider the interaction between two countries managing their own market.


\subsection{Single market}
 
We illustrate the effect of the discount rate, of the number of firms serving the market and of the market concentration on the time and costs required to halt crude production. 

Regarding the effect of the discount rate, we consider that the whole crude market is served by a single company. In that case, we can compare easily the socialy optimal threshold of production to stop crude production $\bar x_1$ and the second-best incentive payment threshold $\hat x_1 = 2^{\frac{1}{\gamma-1}} \bar x_1$ as well as the induced costs. In our model, damages and production profit are discounted at the same rate $\rho$. While the discount rate in the oil industry tends to be quite high where rates of $15$\% are not unusual due to the risk involved in the exploration/expoitation of oil wells, the Stern's Review on Climate Economics (\cite{Stern06}) used a much lower rate of $1.4$\% based on endogeneous long term growth model. The Table below resumes the relevent economic variable for $\rho=3$\% and for $\rho=10$\%. We observe that a large discount rate delays the moment to stop the production both in the first and the second best situation because large discounting factor reduces the value of the future damages. But, at the same, it also reduces the value of future profits from production and thus, it considerably reduces the amount to be paid to the firm in monopoly.

\begin{table}[h]
\centering
\begin{tabular}{c c c c c c c c } 
   & $\bar x_1$ & $\hat x_1$ & $u(X_0)$ & $v(X_0)$ & $\E[e^{-\rho \hat \tau_1} \hat Y_{\hat \tau_1}]$ & $\E[\bar \tau_1]$ & $\E[\hat \tau_1]$      \\ 
   & Mb/d  & Mb/d   & trillions  & trillions  & trillions & y  &  y  \\ \hline
&   &    &  &  &  &   &    \\  
$\rho=3$\%    &  112  & 141   & 1.2  &  $-$164   & 159   & 6.5  & 20  \\ 
&   &    &  &  &  &   &    \\  
$\rho=10$\%  & 109  & 138  &   0.6 &  $-$13   &  10  &  5 &  19  \\ 
\end{tabular}
\caption{Economic indicators in the case of a single firm serving the market. Parameters: $\mu=0.02$, $\sigma=0.08$, $\pi=50$~M\$/Mbaril, $\gamma=4$, $\ell=1.02\,10^{-6}$.}
\label{tab:params}
\end{table}

\begin{figure}[htb!]
\center
\begin{tabular}{c c}
\includegraphics[width=0.45\textwidth]{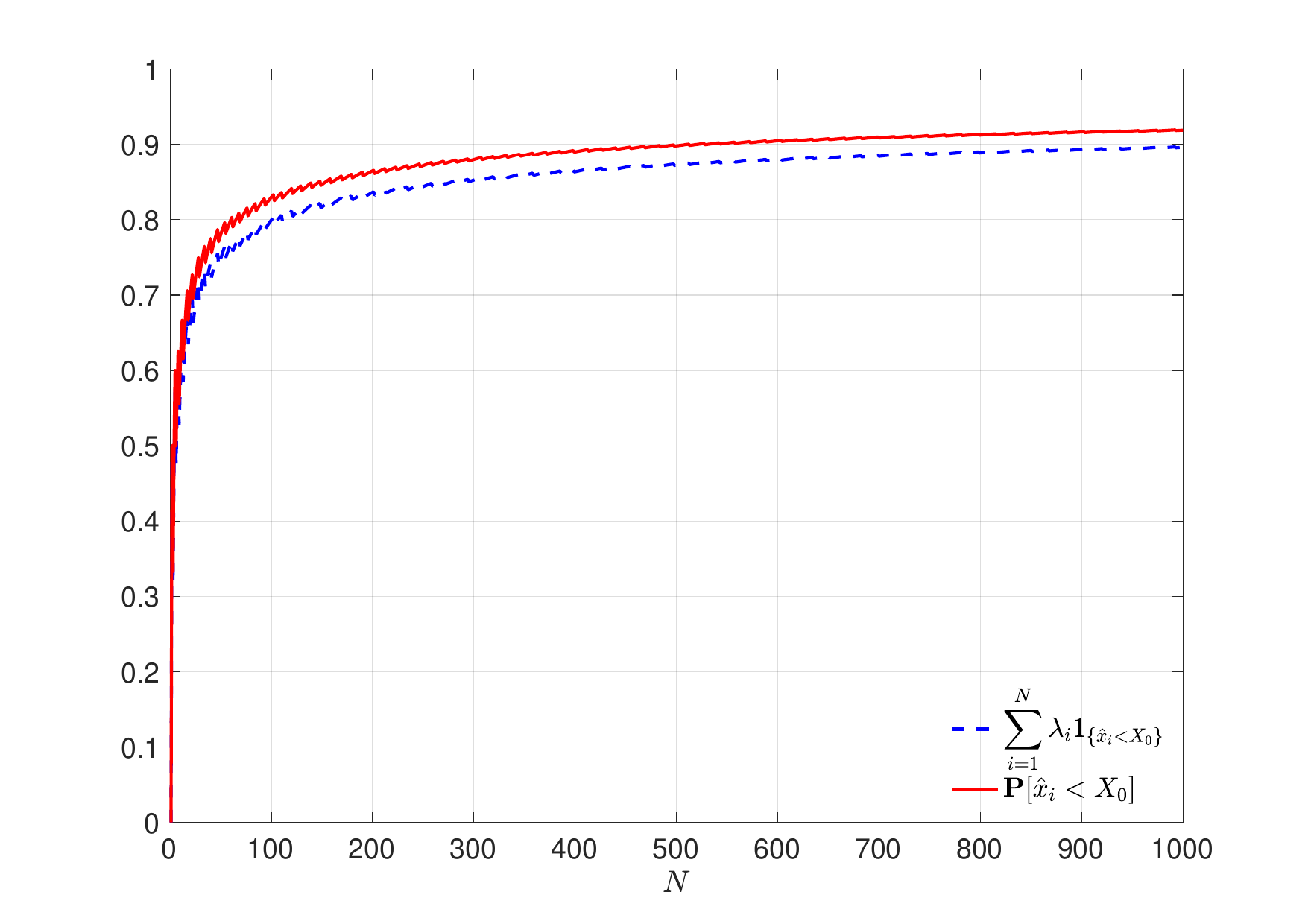} &
\includegraphics[width=0.45\textwidth]{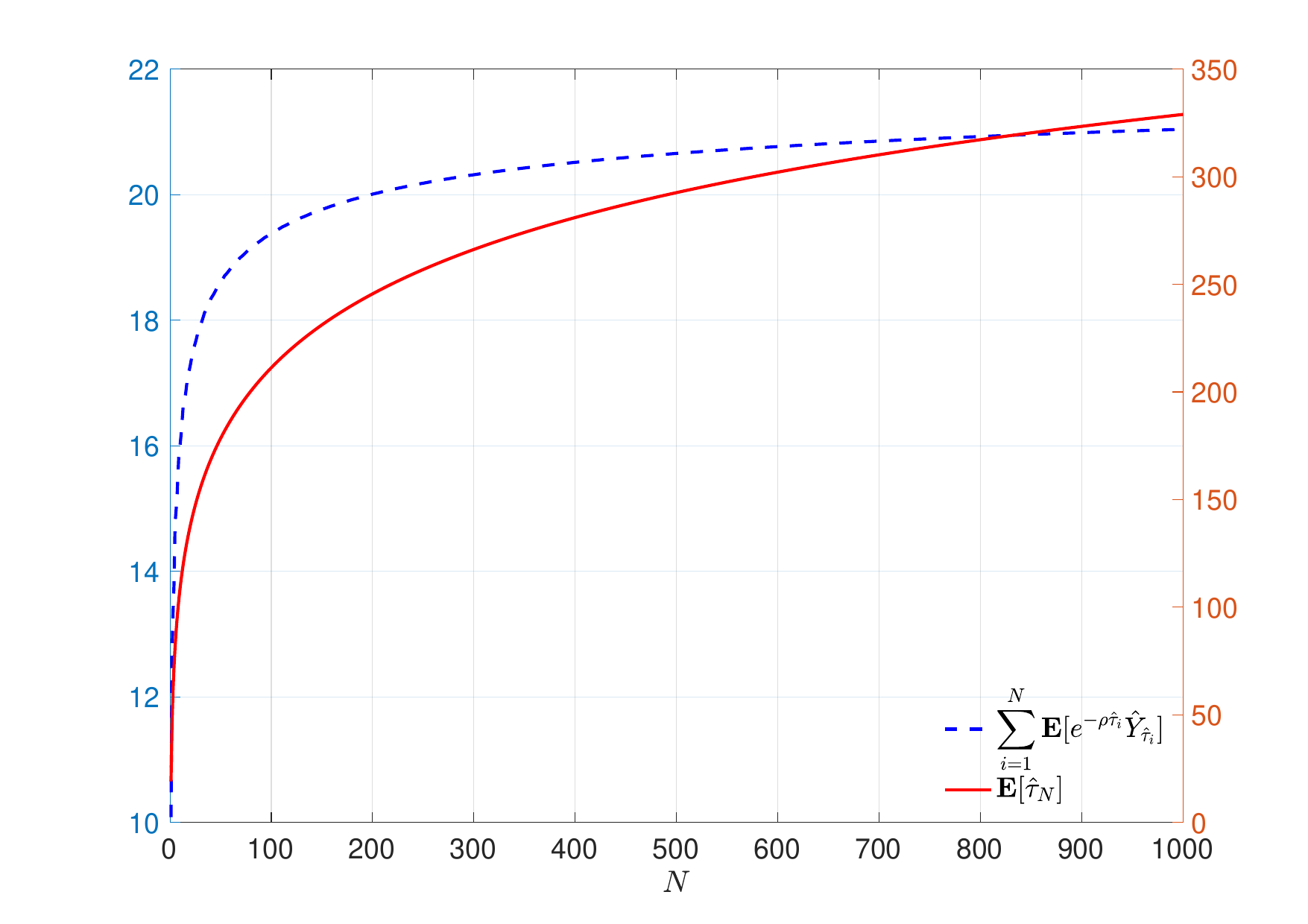}
\end{tabular}
\caption{{\small As a function of the number of firms $N$ (Left) Fraction of firms and of market share leaving immediatly the market. (Right) Total expected discounted payment (in trillions of dollars, left axis) and expect time of closing all production (in years, right axis). Parameters: $\mu=0.02$, $\sigma=0.08$, $\rho=0.1$, $\gamma=4$, $\pi =50\,10^6$, $\ell = 1.02\, 10^{-6}$, $\theta=0.1$. } }
\label{fig:LbdY-1}
\end{figure}

\begin{figure}[htb!]
\center
\begin{tabular}{c c}
\includegraphics[width=0.45\textwidth]{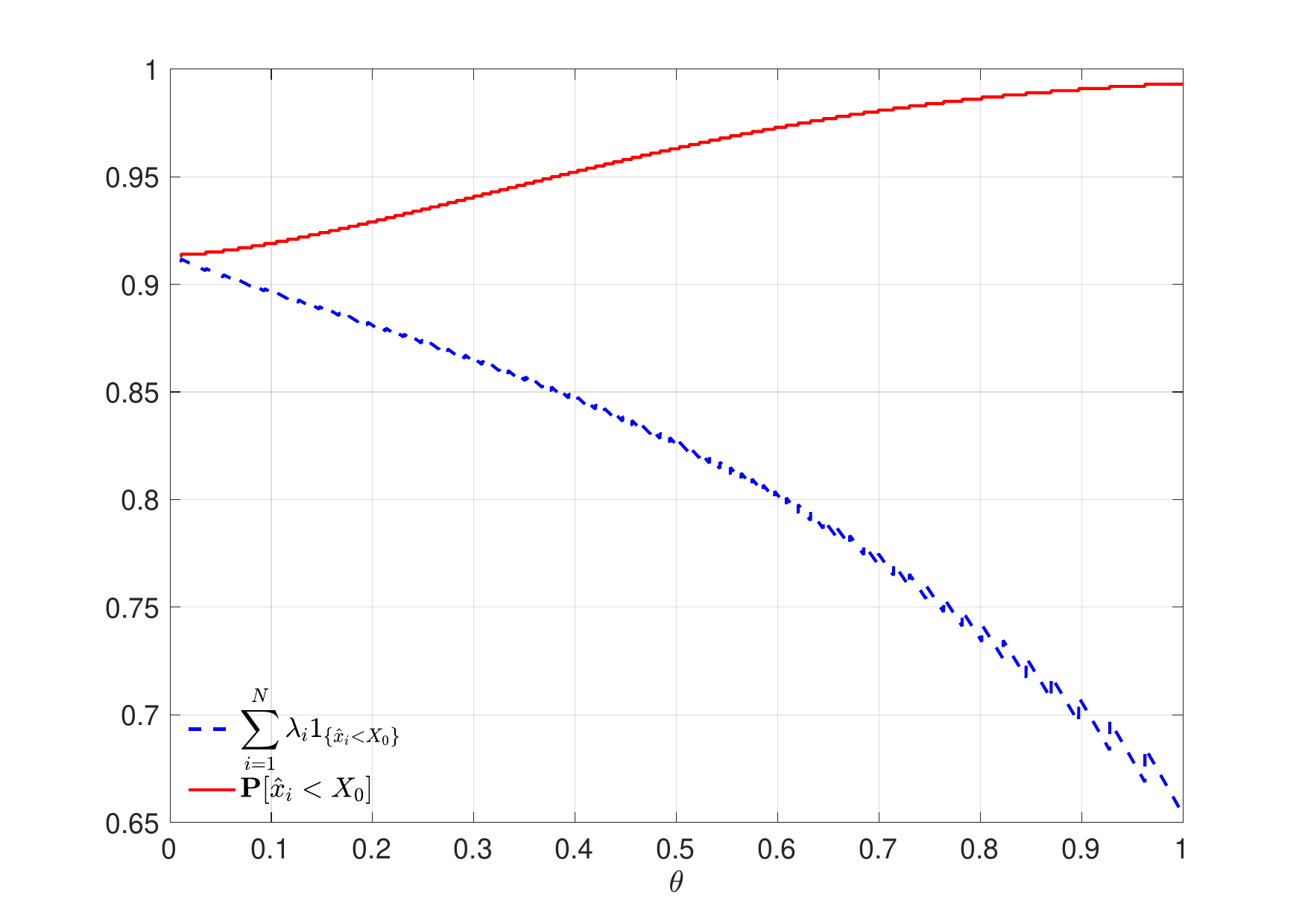}  &
\includegraphics[width=0.45\textwidth]{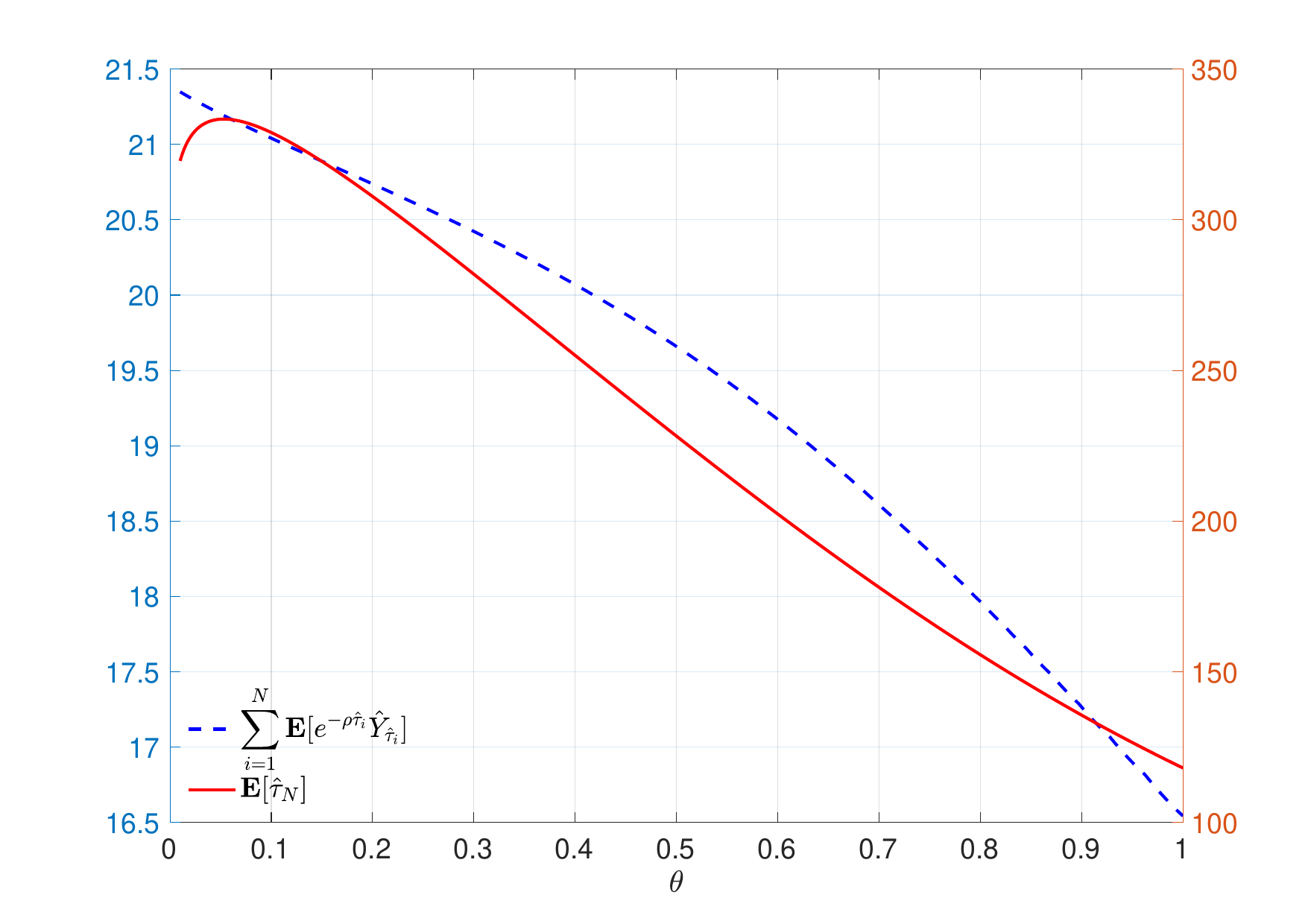} 
\end{tabular}
\caption{{\small As a function of the tail parameter $\theta$ of the size distribution of firms (Left) Fraction of firms and of market share leaving immediatly the market. (Right) Total expected discounted payment (in trillions of dollars, left axis) and expect time of closing all production (in years, right axis). Parameters: $\mu=0.02$, $\sigma=0.08$, $\rho=0.1$, $\gamma=4$, $\pi =50\,10^6$, $\ell = 1.02\, 10^{-6}$, $N=10^3$. } }
\label{fig:LbdY-2}
\end{figure}

Figure~\ref{fig:LbdY-1} illustrates the effect of an increasing number of firms serving the market. To limit the effect of market concentration as the number of firms increases, we take $\theta=0.1$ so that the distribution of market is close to uniform but yet satisfies the condition $\lambda_i < \lambda_j$ for $i<j$. We observe on the left panel the fraction of firms, and the fraction of market share they represent, that are invited to immediatly leave the market. When there are $100$ firms, almost $70$ are exiting the market and they represent as expected $70$\% of the market. Total payment is an increasing function of the number of firms, yet the marginal payment is decreasing. Regarding expected time to halt all production, observe how with even $50$ firms, it requires more than two centuries. Indeed, once a large portion of the production has been stopped ($\approx$ 80\%), the damages induces by the remaining part is sufficiently reduced to allow to wait before closing it down.

Regarding the effects of market concentration, Figure~\ref{fig:LbdY-2} provides the same indicator as in Figure~\ref{fig:LbdY-1} but in this case we have fixed the number of firms to $N=10^3$ and made $\theta$ varies from $0.1$ (no market concentration) to $1$ highly concentrated market. We observe that when $\theta$ gets closer to $1$,  almost all firms are required to leave the market, but this large number of firms represent an ever decreasing amount of production. With $\theta=1$,  $65$\% of total production is eliminated by exiting $99$\% of firms. In terms of payments and expected time to close production, the total expected discounted payment is 21 trillions dollars for one thousand firms and $\theta=0.1$ whereas it is reduced to 16.5 trillions for the same number of firms with a market concentration parameter $\theta=1$. We observe the same reduction in the expected time which is divided by $3$. 

As a result, market concentration makes easier the work of the regulator who wishes to stop a polluting activity.

\subsection{Two markets}

Even if the results on the interaction between regulators in Section~\ref{sec:two-countries} are limited to two markets, our model can provide insights on the interaction of large oil producing countries. Thus, let us consider the two oil producers in the world, namely the USA and Saudi Arabia with a respective production of 13~Mb/d and 10~Mb/d. Using the same parameters value as in the section above, we give in Table~\ref{tab:2countries} the exit thresholds in the second and first best.

\begin{table}[h]
\centering
\begin{tabular}{c c c | c c c || ccc | ccc } 
  \multicolumn{6}{c||}{Competition between countries} & \multicolumn{6}{c}{Social optimum without compensation} \\
   $z_{1,l}$ & $z_1$ & $z_{1,h}$ &  $z_{2,l}$ & $z_2$ & $z_{2,h}$ & $\bar z_{1,l}$ & $\bar z_1$ & $\bar z_{1,h}$   & $\bar z_{2,l}$ & $\bar z_2$ & $\bar z_{2,h}$     \\ 
  &               &           &             &          &         &        &        &         &          &        &      \\  
 135 & 175   & 292   &    119    & 160  & 339  & 85 & 175  & 184  & 75  & 160 & 213    
\end{tabular}
\caption{Exit thresholds expressed in Mb/d in the case of two interacting countries. Parameters: $\rho=0.1$, $\mu=0.02$, $\sigma=0.08$, $\pi=50\,10^6$~M\$/Mbaril, $\gamma=4$, $\ell=1.02\,10^{-6}$. $\lambda_1=13/23$, $\lambda_2=10/23$. }
\label{tab:2countries}
\end{table}

 In the social optimum without compensation situation which can be considered as the first-best social optimum, it holds that $\bar z_2 < \bar z_1$ and the optimum is thus given by $(\bar z_{1,h}, \bar z_{2,l})$. Thus, Saudi Arabia should immediatly stop since $\bar z_{2,l}$ is already lower than the current value of crude of 100~Mb/d. On the other hand, the USA could have an expected time before closing all wells of $36$ years ($\ln(184/100)/(\mu-\frac12\sigma^2)$). Regarding the competition between countries, we are in the situation (ii) of Proposition~\ref{thm:NashE} where $z_1<z_2$ and $z_{2,l} < z_1 < z_2 < z_{1,h}$. Thus, there are two equilibria: either Saudi Arabia stops first or USA stops first. This is a standard example in games of stopping of second-mover advantage. In the first equilibrium,  Saudi Arabia would run its oil production for an expected lifetime of $10$ years and the USA for $63$ years. In the second, the USA would stop production in $17$ years and Saudi Arabia in $72$ years. It is unlikely that any of these two countries would prefer the equilibrium where it stops first.  Thus, in the present modeling framework, there are no reason for the oil production to stop. Indeed, alternative incentive mechanisms could be considered that would help resolve this second-mover advantage. In particular, one could introduce compensation mechanism between countries. The 2005 Paris Agreement under Article~9 establishes a Green Climate Fund that can be considered and embryo of those types of compensation scheme. We leave for future research the introduction of cross countries optimal incentive payments to forster the removal of brown assets.

\section{Conclusion}

In this paper, we analysed how a mechanism based on incentive payment would lead to the progressive closing of polluting firms. This mechanism simply compensate firms for their lost future income. 
When there are several firms  (whose profit rates are proportional to their size) and the payment is to be the same  per unit of production for any firm, we show that small firms should exit first, leaving large polluting firms being removed last. To this regard, it can be understood that the market concentration is favorable to reducing the cost of closing down a polluting activity scattered amongst a large number of firms. Indeed, it is easier to close down a monopoly than a highly fragmented market. We also provided the equilibrium between interacting countries polluting each other and showed evidence of possible multiple equilibria sustained by a second-mover advantage. In the absence of cross payment between countries, it is thus unlikely that countries will agree on an orderly manner of closing down carbon emissive assets. 

\appendix

\section{Appendix}	
	Recall that $X$ follows the  geometric Brownian motion dynamic, with constants $\sigma > 0$ and $\mu > \frac{\sigma^2}{2}$,
	$$
	dX_t ~=~ \mu X_t dt + \sigma X_t dW_t,
	$$
	and $\mathcal{T}$ is the set of all stopping times w.r.t. the filtration $\mathbb{F}$ generated by $X$ such that
	$$
		 \E(e^{-\rho \tau}X_{\tau}^{\gamma} \log^+(e^{-\rho \tau} X_{\tau}^{\gamma})) <\infty,
	$$
	where $\log^+x = \max(\log x,0).$ 
	We consistently employ $C$ to represent positive constants whose value may differ from one to another.

\subsection{A free boundary problem and its solution}
\label{app:comp}
	
	Let $\alpha$, $\beta>0$, $\gamma \geq 2$, $M:= \gamma \mu + \frac{1}{2} \sigma^2  (\gamma^2 - \gamma)$, $\rho \in ( \mu, M)$,
	and the constants $a$, $b$, $m$ are defined as in \eqref{eq:def_abm}, i.e.
	$$
		a := \frac{1}{\gamma [\mu + \frac{1}{2} \sigma^2 (\gamma - 1)] -\rho} >0,
		~~
		b:= \frac{1}{\rho-\mu} >0,
		~~
		m ~:=~ \frac{1}{2} - \frac{\mu}{\sigma^2} + \sqrt{ \Big( \frac{1}{2} - \frac{\mu}{\sigma^2} \Big)^2 +\frac{2\rho}{\sigma^2} }.
	$$
	Further, let $a_1, \, a_2, \, a_3$ be constants satisfying $a_1\leq 0$ and $\frac{\beta}{M-\rho} - a_3 >0$.  
	We consider the following optimal stopping problem:
	\begin{align} \label{eq:OT_A}
		u(x) ~:=~\sup_{\tau \in \mathcal{T}} 
		\E\Big[  \int_{0}^{\tau} e^{-\rho t}\big[\alpha X_t - \beta X_t^{\gamma} \big] \dd t + e^{-\rho \tau} (a_1 X_{\tau}+a_2 X_{\tau}^{m} +a_3 X_{\tau}^{\gamma}) ~\Big| X_0 = x \Big],
		~~x > 0.
	\end{align}
	Define 
	$$
		x^{\ast} ~:=~ \Big( \frac{m-1}{{\gamma}-m} \frac{{\alpha} b- a_1}{  \beta a -a_3}  \Big)^{1/({\gamma}-1)},
	$$
	and
	$$
		f(x) := \alpha x - \beta x^{\gamma}, ~~ g(x) := a_1 x + a_2 x^{m} + a_3 x^{\gamma},
		~~x \ge 0.
	$$

	\begin{Proposition} \label{prop:OT_solution} 
		In the above setting, the value function of the optimal stopping problem \eqref{eq:OT_A} is given by
		$$ u(x)  =
			\Big( \alpha bx + \beta a x^{\gamma} +  ( g(x^{\ast})  - \alpha bx^{\ast} - \beta a (x^{\ast})^{\gamma} ) \Big( \frac{x}{x^{\ast}}\Big)^m \Big) \1_{\{x \in (0, x^{\ast})\}} +  g(x) \1_{\{x \in [x^{\ast}, \infty)\}}.
		$$
		Moreover, the hitting time $\htau := \inf\{ t \ge 0 ~: X_t \ge x^{\ast} \}$ is an optimal solution to the optimal stopping problem \eqref{eq:OT_A}.
	\end{Proposition}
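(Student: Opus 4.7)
The plan is to treat this as a standard free-boundary problem in continuous time: guess the form of the solution (single threshold), derive the free boundary via value matching and smooth pasting, and close the argument with a verification based on It\^o's formula.

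First I would set up the candidate value function on the conjectured continuation region $(0, x^\ast)$. There $u$ should satisfy the ODE $\mathcal{L}u - \rho u + f = 0$, where $\mathcal{L}\varphi(x) = \mu x \varphi'(x) + \tfrac12\sigma^2 x^2 \varphi''(x)$. A direct computation using $(\rho-\mu) b = 1$ and $(M - \rho) a = 1$, with $M = \gamma\mu + \tfrac12\sigma^2(\gamma^2-\gamma)$, shows that the particular solution is $u_p(x) = \alpha b x + \beta a x^{\gamma}$. The two roots of the characteristic equation $\tfrac12 \sigma^2 k(k-1) + \mu k - \rho = 0$ are $m$ (as defined) and a negative root $m' < 0$; requiring the value function to be bounded at $0$ rules out the $x^{m'}$ term. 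Hence on $(0, x^\ast)$ the candidate takes the form $u(x) = \alpha b x + \beta a x^{\gamma} + C_1 x^m$ for an unknown $C_1$, while $u(x) = g(x)$ on $[x^\ast, \infty)$.

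Next, I would impose value matching $u(x^\ast) = g(x^\ast)$ and smooth pasting $u'(x^\ast) = g'(x^\ast)$ at the free boundary. Value matching pins down $C_1 = (g(x^\ast) - \alpha b x^\ast - \beta a (x^\ast)^\gamma)/(x^\ast)^m$, which is exactly the coefficient appearing in the stated formula. Inserting this into smooth pasting, the $a_2$ contributions cancel and after collecting the constant and $(x^\ast)^{\gamma-1}$ terms one obtains
\begin{equation*}
(1-m)(\alpha b - a_1) + (\gamma - m)(\beta a - a_3)(x^\ast)^{\gamma - 1} = 0,
\end{equation*}
which gives the announced formula for $x^\ast$. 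The hypotheses $a_1 \le 0$ and $\beta/(M-\rho) - a_3 > 0$, together with $m \in (1,\gamma)$, ensure that $x^\ast$ is well defined and positive.

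Then I would carry out a verification argument. The candidate $u$ is $C^1$ globally and $C^2$ off $x^\ast$. I would check the two inequalities that characterize the value function of an optimal stopping problem: (a) $u \ge g$ on $(0, x^\ast)$, by analyzing $h(x) := u(x) - g(x)$, which vanishes with zero derivative at $x^\ast$, satisfies $(\mathcal{L}-\rho)h = -f - (\mathcal{L}-\rho)g$, and whose sign follows from convexity/monotonicity considerations on $(0, x^\ast)$; (b) $(\mathcal{L}u - \rho u + f)(x) \le 0$ on $(x^\ast, \infty)$, which reduces to checking $(\mathcal{L}g - \rho g + f)(x) \le 0$ for $x \ge x^\ast$ and then noting the sign is controlled by the factor $(\beta a - a_3)(\gamma - m)$ times a power of $x$ that vanishes precisely at $x^\ast$ by the choice of threshold. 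Applying It\^o's formula to $e^{-\rho t} u(X_t)$ then yields, for any $\tau \in \mathcal{T}$,
\begin{equation*}
u(x) \ge \mathbb{E}\Big[\int_0^{\tau} e^{-\rho t}\big(\alpha X_t - \beta X_t^{\gamma}\big)\dd t + e^{-\rho \tau} g(X_\tau)\Big],
\end{equation*}
with equality when $\tau = \htau := \inf\{t \ge 0 : X_t \ge x^\ast\}$, since the process then stays in the continuation region and exits at $x^\ast$ where $u = g$.

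The main technical obstacle is the integrability management, since $u$ and $g$ grow like $x^\gamma$ and the stopping times in $\mathcal{T}$ are only subject to the $x \log^+ x$-type condition $\mathbb{E}[e^{-\rho\tau} X_\tau^\gamma \log^+(e^{-\rho \tau} X_\tau^\gamma)] < \infty$. I would use the assumption $\rho < M$ to show that $e^{-\rho t} X_t^\gamma$ is a positive supermartingale (in fact $\mathbb{E}[e^{-\rho t} X_t^\gamma] = x^\gamma e^{(M-\rho)t}$ — so actually the bound needs to be handled via the stopping rule), and combine uniform integrability with Fatou/dominated convergence to pass to the limit $t \to \infty$ in the localized It\^o identity, thereby legitimizing the formal verification above.
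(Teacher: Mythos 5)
Your proposal is correct in substance but follows a genuinely different route from the paper. You posit the one-threshold structure from the outset, pin down $x^{\ast}$ and the coefficient by value matching and smooth pasting applied directly to the problem with running payoff, and then close with a verification theorem (checking $u\ge g$ on the continuation region and $\mathcal{L}u-\rho u+f\le 0$ on the stopping region, plus It\^o). The paper instead first applies It\^o's formula to $X$ and $X^{\gamma}$ to strip out the running payoff and reduce to a pure stopping problem $\sup_\tau \E[e^{-\rho\tau}((a_1-b\alpha)X_\tau+(a_3-\beta a)X_\tau^{\gamma})]$, then uses viscosity-solution theory and a uniqueness argument to \emph{derive} that the stopping region must be a half-line $[x^{\star},\infty)$, invokes Shiryaev for optimality of the hitting time, and only then solves the ODE with smooth fit. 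Your route is more elementary and self-contained (no viscosity machinery), at the price of having to verify the variational inequalities by hand; the paper's route establishes that the threshold form is forced rather than merely sufficient. The computational core --- the particular solution $\alpha b x+\beta a x^{\gamma}$, the discarded negative root, and the smooth-pasting identity $(1-m)(\alpha b-a_1)+(\gamma-m)(\beta a-a_3)(x^{\ast})^{\gamma-1}=0$ --- is identical in both.

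Two points in your sketch need repair before it is watertight. First, in step (b) the quantity $\mathcal{L}g-\rho g+f = \tfrac{1}{b}(\alpha b-a_1)x-\tfrac{1}{a}(\beta a-a_3)x^{\gamma}$ does \emph{not} vanish at $x^{\ast}$; it vanishes at $\big(\tfrac{a}{b}\tfrac{\alpha b-a_1}{\beta a-a_3}\big)^{1/(\gamma-1)}$, which lies strictly below $x^{\ast}$. The inequality on $[x^{\ast},\infty)$ therefore holds, but only because $\tfrac{m-1}{\gamma-m}>\tfrac{a}{b}$, a fact you must prove (it follows from strict convexity of the characteristic polynomial $Q(k)=\tfrac12\sigma^2k(k-1)+\mu k-\rho$ together with $Q(1)<0<Q(\gamma)$ and $Q(m)=0$; the paper records the equivalent statement $\tfrac{m-1}{\gamma-m}\tfrac{b}{a}>1$ in a remark). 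Second, $e^{-\rho t}X_t^{\gamma}$ is a submartingale, not a supermartingale, since $M>\rho$; the uniform integrability you need comes from Doob's $L\log L$ maximal inequality applied to this submartingale under the defining condition of $\mathcal{T}$, which yields $\E[\sup_{t\le\tau}e^{-\rho t}X_t^{\gamma}]<\infty$ --- this is exactly how the paper justifies the passage to the limit in its accompanying remark.
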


	\begin{proof} 
		Notice that $-\rho+\mu m +\frac{\sigma^2}{2}(m^2-m) = 0$. 
		It follows by Itô's formula that, for any $\tau \in \mathcal{T}$,
		\begin{align*}
			&\E\Big[  \int_{0}^{\tau} e^{-\rho t}\big[\alpha X_t - \beta X_t^{\gamma} \big] \dd t + e^{-\rho \tau} (a_1 X_{\tau}+a_2 X_{\tau}^{m} +a_3 X_{\tau}^{\gamma}) \Big] \\ 
			= & \E\Big[  e^{-\rho \tau} ((a_1-b\alpha) X_{\tau}+(-\beta a +a_3) X_{\tau}^{\gamma}) \Big]+\beta a x^{\gamma}+a_2 x^m+ b \alpha x.
		\end{align*}
		Therefore, the value function in \eqref{eq:OT_A} is equivalent to
		\begin{equation}  \label{eq:OT_B}
			v(x)~:= ~ \sup_{\tau \in \mathcal{T}} 	\E\Big[  e^{-\rho \tau} ((a_1-b\alpha) X_{\tau} +(-\beta a +a_3) X_{\tau}^{\gamma}) \Big].
		\end{equation}
		By classical result (see e.g. Theorem 5.2.1 of \cite{Pham}),
		the value function $v(x)$ in \eqref{eq:OT_B} is a viscosity solution to variational inequality
		\begin{equation} \label{ineq:var}
			\text{min} [ \rho h - \mathcal{L} h , h - g_v]=0 \text{ on }\mathbb{R}^+, 
		\end{equation}
		where $ \mathcal{L} h = \mu x  h'(x)+ \frac{1}{2} \sigma^2 x^2 h''(x)$ and $g_v(x) = (a_1-b\alpha) x +(-\beta a+a_3)x^{\gamma}.$
		
		Let us define the stopping region $\mathcal{S}:= \{ x \in \mathbb{R}^+: v(x) = g_v(x) \} $ and the continuation region $\mathcal{D}:= \{ x  \in \mathbb{R}^+ ; v(x) > g(x) \}$. Notice that $x \in \mathcal{S}$ implies $x $ is a local minimum such that $\min(v-g_v) = (v-g_v)(x)=0$ due to $v(x)\geq g_v(x)$. Since $g_v$ is $C^2$ in $x \in (0,\infty)$ and $v$ is a viscosity supersolution of \eqref{ineq:var}, $\rho g_v(x) - \mathcal{L} g_v(x)  \geq 0$ must hold. By its definition, it is easy to see that $\rho g_v(x) - \mathcal{L} g_v(x) \geq 0$ if and only if $x \geq \Big( \frac{a}{b} \frac{\alpha b - a_1}{\beta a - a_3} \Big)^{1/({\gamma}-1)}$. Therefore, $\mathcal{S} \cap (0,\Big( \frac{a}{b} \frac{\alpha b - a_1}{\beta a - a_3} \Big)^{1/({\gamma}-1)}) $ is an empty set.
		Further, by the condition on constant $\rho$, one has
		$$
			\lim_{T \to \infty} \E \Big[e^{-\rho T} ((a_1-b\alpha) X_{T} +(-\beta a +a_3) X_{T}^{\gamma})  \Big]  = -\infty.
		$$
		We claim that $\mathcal{S} = [x^{\star} ,\infty)$. Otherwise, if there is a bounded open set $(o_1,o_2)$ belongs to $\mathcal{D}$, 
		notice that $g$ and $v$ satisfy \eqref{ineq:var} with boundary $v=g$ at points $o_1$ and $o_2$. 
		By the uniqueness of viscosity solution, one can deduce that $v=g$ which contradicts the assumption $(o_1,o_2) \subset \mathcal{D}$. 
		
		Next, let prove that $v(0+)=0$. First, we have 
		$$
			v(x) \geq g_v(x) \, \text{ for all } x>0,
		$$
		and hence  $v(0+) \geq g_v(0+)=0$.
		Further, it is obvious that $v(x) \leq 0 $ for all $x \in \mathbb{R}^+.$ Therefore, we can conclude $v(0+)=0.$
		
			

		According to the previous analysis, we know that the stopping region $ \mathcal{S}$ has the form $[x^{\star}, \infty)$ for some $x^{\star} >0$. From the Theorem 6 of Chapter 3 in \cite{Shiryaev}, the hitting time $\htau := \inf\{ t \ge 0 ~: X_t \ge x^{\star} \}$ is an optimal solution to the optimal stopping problem \eqref{eq:OT_B}.

		Since $v(x)$ is a viscosity solution of (\ref{ineq:var}) on $\mathbb{R}^+$ and $v(x)>g_v(x)$ on $(0, x^{\star})$, actually $v(x)$ is a viscosity solution of $\rho v - \mathcal{L} v = 0$ on $(0, x^{\star})$. 
		By the uniqueness of solution of linear PDE on a bounded domain, we claim that $v(x)$ is the solution in a classical sense. Therefore, 
		$v(x)$ satisfies 
		$$
		\rho v - \mu x v' - \frac{1}{2} \sigma^2 x^2 v''  = 0, \, 
		\text{in } (0,x^{\star}).
		$$   
		
		Notice that $v(0+) = 0 $. Upon solving the ODE, we have 
		$$
		v(x) = A x^m  ,
		$$
		where $m$ is defined as before and $A$ is some constant. Furthermore, the smooth-fit principle of optimal stopping problems guarantees $v(x)$ is $\mathcal{C}^1$ on $x^{\star},$ so we can obtain an equation system 
		\[
		\begin{cases}
			A (x^{\star})^m  & =  (a_1-b\alpha)  x^{\star} + (-\beta a +a_3)(x^{\star})^{\gamma}  , \\
			
			A m(x^{\star})^{m-1} & = (a_1-b\alpha)  +  {\gamma} (-\beta a +a_3) (x^{\star})^{{\gamma}-1} . \\
		\end{cases}
		\]

After solving the equation, we get
\begin{align*}
x^{\star} & = \Big( \frac{m-1}{{\gamma}-m} \frac{\alpha b - a_1}{\beta a-a_3 }  \Big)^{1/({\gamma}-1)} , \quad 
A  =  g_v(x^{\star}) \big( \frac{1}{x^{\star}} \big)^m.
\end{align*}

It follows that
$$
v(x) =  \big( (a_1-b\alpha)  x^{\star} + (-\beta a +a_3)(x^{\star})^{\gamma} \big) \big( \frac{x}{x^{\star}} \big)^m \1_{\{x \in (0, x^{\star})\}}  + \big( (a_1-b\alpha)  x + (-\beta a +a_3)(x)^{\gamma} \big) \1_{\{x \in [x^{\star}, \infty)\}} .
$$
Finally, using the definition of $x^{\ast}$, we conclude that 
$$u(x) =
\Big( \alpha bx + \beta a x^{\gamma} +  ( g(x^{\ast})  - \alpha bx^{\ast} - \beta a (x^{\ast})^{\gamma} ) \Big( \frac{x}{x^{\ast}}\Big)^m \Big) \1_{\{x \in (0, x^{\ast})\}}+ g(x) \1_{\{x \in [x^{\ast}, \infty)\}},$$
and the hitting time $\htau := \inf\{ t \ge 0 ~: X_t \ge x^{\ast} \}$ is an optimal solution to the optimal stopping problem \eqref{eq:OT_A}. \qed\end{proof}
	
	\begin{Remark}
	For any $\tau \in \mathcal{T}$, we take $\tau_n := \inf \{ t>0; X_t \geq n \} \wedge n \wedge \tau $. It is clear that $\tau_n \uparrow \tau$. According to Ito's formula, we get
	$$
		\E(e^{-\rho \tau_n} X_{\tau_n}^{\gamma}) = x_0^{\gamma} + \E (\int_0^{\tau_n} (-\rho +\gamma \mu + \frac{\gamma(\gamma-1)\sigma^2}{2} ) e^{-\rho t} X_t^{\gamma} dt ) + \E ( \int_0^{\tau_n} e^{-\rho t} \gamma \sigma X_t^{\gamma} dW_t).
	$$
	Then, taking the limit $n \to \infty$ and by the definition of $\mathcal{T}$,
	it follows that 
	$$\E \big[ e^{-\rho \tau} X_{\tau}^{\gamma} \big] = x_0^{\gamma} + \E \Big[ \int_0^{\tau} (-\rho +\gamma \mu + \frac{\gamma(\gamma-1)\sigma^2}{2} ) e^{-\rho t} X_t^{\gamma} dt \Big].$$
	In general, since $e^{-\rho t} X_t^{\gamma}$ is a submartingale, 
	by Theorem 5.4.4 of \cite{Durrett}, we has
	$$
		\E \Big[ \sup_{0\leq t \leq \tau} e^{-\rho t} X_t^{\gamma} \Big] 
		\leq 
		\frac{e}{e+1} \Big(1+\E\Big[ e^{-\rho \tau} X_{\tau}^{\gamma} \log^+( e^{-\rho \tau} X_{\tau}^{\gamma} ) \Big] \Big) < \infty.
	$$ 
	\end{Remark}

\begin{Remark} \label{Remark: hitting_sol}
	Once we know that the optimal stopping time is a hitting time or if we only consider hitting time solutions, we can derive the solution to the optimal problem (\ref{eq:OT_B}) easily from another perspective. Combining this with Proposition \ref{prop:payment}, the value function of (\ref{eq:OT_B}) is equivalent to
	$$
	v(x) = \sup_{y} ((a_1 - b \alpha) y + (-\beta a + a_3) y^{\gamma}) \left(\frac{x}{y}\right)^m,
	$$
	when considering $\tau = \inf\{t > 0: X_t = y\}$. 
\end{Remark}
	
	\begin{Proposition} \label{prop:verification}
		Let us consider the optimal stopping problem 
		$$
			u(x) : = \sup_{\tau \in \mathcal{T}} \E^{0,x} [ e^{-\rho \tau } g(X_{\tau}) ], ~~~x > 0,
		$$
		where $g(x)  \in C^1(\mathbb{R}^+) \cap C^2(\mathbb{R}^+ \setminus \mathcal{V})$ and $g(x)$ is a polynomial of degree less than or equal to $\gamma$, $\mathcal{V}$ is a finite set in $\mathbb{R}^+$. 
		Assume that the function $\varphi(x) $ satisfies the following properties: 
		\begin{itemize}
			\item[(1)] $\varphi \in C^1(\mathbb{R}^+) \cap C^2(\mathbb{R}^+ \setminus \mathcal{V})   $ and $\varphi$ is continuous at $0$
			\item[(2)] $\varphi''$ is locally bounded near each point of $\mathcal{V}$
			\item[(3)]  $\varphi(0+) = g(0+)$ and for some point $x_g$, $\varphi > g $ on $(0,x_g)$ and $\varphi = g$ on $[x_g,\infty)$
			\item[(4)] $\rho \varphi - \mathcal{L} \varphi \geq 0$ on $(x_g, \infty) \setminus \mathcal{V} $ and $\rho \varphi - \mathcal{L} \varphi = 0$ on $(0, x_g)$
			\item[(5)] $\tau_{x_g} : = \inf \{ t>0; X_t \notin (0, x_g) \} < \infty $ a.s. for all $ x \in \mathbb{R}^+$
			\item[(6)] the family $\{ \varphi(X_{\tau}) ; \tau \leq \tau_{x_g} \}$ is uniformly integrable for all $x \in \mathbb{R}^+$.
		\end{itemize}
		Then $\varphi(x) = u(x)$ and $\tau^{\ast} = \tau_{x_g}$ is an optimal stopping time.
		
	\end{Proposition}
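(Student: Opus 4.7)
The argument is a standard verification calculation. I plan to apply Itô's formula to $e^{-\rho t}\varphi(X_t)$ in order to exhibit this process as a supermartingale (giving $\varphi \geq u$) and as a true martingale up to the candidate exit time $\tau_{x_g}$ (giving equality, since $\varphi(x_g) = g(x_g)$).

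The first technicality is to justify Itô's formula for $\varphi$, which is only $C^1$ on the finite exceptional set $\mathcal{V}$. Since $\varphi''$ is locally bounded near each point of $\mathcal{V}$ by condition (2) and the geometric Brownian motion $X$ has an absolutely continuous occupation measure on $(0,\infty)$, the Meyer--Itô formula reduces to the classical one (the local-time corrections vanish); equivalently, one may mollify $\varphi$ on neighborhoods of $\mathcal{V}$, apply standard Itô, and pass to the limit using the $L^\infty_{\mathrm{loc}}$ bound on $\varphi''$. Either way, for any stopping time $\tau$ and $t \geq 0$,
\begin{equation*}
e^{-\rho(t\wedge\tau)}\varphi(X_{t\wedge\tau}) = \varphi(x) - \int_0^{t\wedge\tau} e^{-\rho s}(\rho\varphi - \mathcal{L}\varphi)(X_s)\,ds + \int_0^{t\wedge\tau} e^{-\rho s}\sigma X_s \varphi'(X_s)\,dW_s.
\end{equation*}

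For the upper bound, condition (4) ensures $(\rho\varphi - \mathcal{L}\varphi)(x) \geq 0$ for all $x \in (0,\infty)\setminus\mathcal{V}$ (vanishing on $(0,x_g)$, non-negative on $(x_g,\infty)$). Localizing the stochastic integral via $\tau_n := n \wedge \inf\{t : X_t \geq n\} \wedge \tau$ and taking expectations gives $\E[e^{-\rho\tau_n}\varphi(X_{\tau_n})] \leq \varphi(x)$, and then $\varphi \geq g$ from (3) yields $\E[e^{-\rho\tau_n}g(X_{\tau_n})] \leq \varphi(x)$. Since $g$, and hence $\varphi$, have polynomial growth of degree at most $\gamma$, the maximal inequality $\E[\sup_{s \leq \tau} e^{-\rho s} X_s^\gamma] < \infty$ recorded in the preceding remark (a consequence of the $x\log^+ x$ integrability encoded in $\mathcal{T}$) supplies a dominating envelope, and Fatou/dominated convergence yield $\E[e^{-\rho\tau}g(X_\tau)] \leq \varphi(x)$ for every $\tau \in \mathcal{T}$, hence $u(x) \leq \varphi(x)$.

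For the lower bound, set $\tau = \tau_{x_g}$. For $s < \tau_{x_g}$ one has $X_s \in (0,x_g)$, so by condition (4) the Riemann integral term vanishes pathwise, and $M_t := e^{-\rho(t \wedge \tau_{x_g})} \varphi(X_{t \wedge \tau_{x_g}})$ is a local martingale. By condition (6) it is uniformly integrable, hence a true martingale. Letting $t \to \infty$, condition (5) gives $\tau_{x_g} < \infty$ a.s., the continuity of $X$ forces $X_{\tau_{x_g}} = x_g$, and (3) gives $\varphi(x_g) = g(x_g)$, so
\begin{equation*}
\varphi(x) = \E[M_0] = \E[M_\infty] = \E\bigl[e^{-\rho\tau_{x_g}} g(X_{\tau_{x_g}})\bigr] \leq u(x),
\end{equation*}
which combined with the previous step yields $\varphi = u$ and the optimality of $\tau_{x_g}$. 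The main technical obstacle is the limit passage in the supermartingale step: closing the argument for arbitrary $\tau \in \mathcal{T}$ (rather than bounded stopping times) requires uniform integrability of $e^{-\rho\tau_n}\varphi(X_{\tau_n})$ and vanishing of the localized stochastic integral's expectation, both of which rely precisely on the polynomial growth of $\varphi$ together with the $x\log^+ x$ moment condition embedded in the definition of $\mathcal{T}$.
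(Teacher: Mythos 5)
Your proposal is correct and follows essentially the same route as the paper: the paper justifies the It\^o/Dynkin step by approximating $\varphi$ with $C^2$ functions via Theorem D.1 of \cite{Oksendal} (equivalent to your mollification/Meyer--It\^o argument for a function that is $C^1$ with locally bounded second derivative off a finite set), then obtains the upper bound from the supermartingale inequality together with $\varphi \ge g$ and the growth/integrability control coming from $\mathcal{T}$, and the lower bound from the martingale property of $e^{-\rho(t\wedge\tau_{x_g})}\varphi(X_{t\wedge\tau_{x_g}})$ on the continuation region. The only differences are presentational (your explicit appeal to the maximal inequality of the preceding remark versus the paper's briefer reference to the properties of $g$ and $\mathcal{T}$).
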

	\begin{proof}
		According to the conditions (1) (2) and the Theorem D.1 of \cite{Oksendal}, we can obtain a sequence of functions $\varphi_j \in C^2(\mathbb{R}^+) $ such that 
		\begin{itemize}
			\item[(i)] $\varphi_j \to \varphi $ uniformly on compact subsets of $\mathbb{R}^+$, as $j \to \infty$
			\item[(ii)] $\rho \varphi_j - \mathcal{L} \varphi_j \to \rho \varphi - \mathcal{L} \varphi$ uniformly on compact subsets of $\mathbb{R}^+ \setminus \mathcal{V}$, as $j \to \infty$
			\item[(iii)] $ \{ \rho \varphi_j - \mathcal{L} \varphi_j \}_{j=1}^{\infty}$ is locally bounded on $\mathcal{V}$       
		\end{itemize}
		For any $R>0$, define $T_R = \inf\{ t>0; X_t \geq R \} \wedge R$ and $\tau \in {\mathcal{T}}$ be a stopping time. For all $x \in \mathbb{R}^+$, by Dynkin's formula, we obtain 
		$$
		\E^{0,x} \big[e^{-\rho (\tau \wedge T_R)} \varphi_j(X_{\tau \wedge T_R}) \big] 
		=
		\varphi_{j}(x) -\E^{0,x} \Big[\int_0^{\tau \wedge T_R} e^{-\rho s} (\rho \varphi_j - \mathcal{L} \varphi_j )(X_t)dt \Big].
		$$
		Taking the limit $j \to \infty$, one has
		$$
		\varphi(x) = \E^{0,x} \Big[ e^{-\rho (\tau \wedge T_R)} \varphi(X_{\tau \wedge T_R}) + \int_0^{\tau \wedge T_R} e^{-\rho s} (\rho \varphi - \mathcal{L}\varphi)(X_t) dt \Big].
		$$
		Based on our assumption that $\varphi(x) \geq g(x)$ and $\rho \varphi - \mathcal{L}\varphi \geq 0$, we will have
		$$
		\varphi(x) \geq E^{0,x}[e^{-\rho (\tau \wedge T_R)} g(X_{\tau \wedge T_R})].
		$$
		Take limitation for $R$ and apply the property of 
		$g(x)$ and $\widetilde{\mathcal{T}}$, we have 
		$$
		\varphi(x) \geq \lim_{R \to \infty} E^{0,x}[e^{- \rho (\tau \wedge T_R)} g(X_{\tau \wedge T_R})] \geq E^{0,x}[e^{- \rho \tau} g(X_{\tau })]  .
		$$
		
		Due to the arbitrariness of $\tau$, we have $\varphi(x) \geq u(x)$ for any $x \in \mathbb{R}^+$. Furthermore, if $x \in (x_g,\infty)$, we have known $\varphi(x) = g(x) \leq u(x)$, so $\varphi(x) = u(x)$ and $\hat \tau = 0.$ If $x \in (0,x_g)$, let $\tau_g:= \inf \{t >0, X_t \notin (0,x_g) \}$, by Dynkin's formula, for any $x \in (0,x_g)$, 
		$$
		\begin{array}{lll}
			\varphi(x) & = & \lim_{j \to \infty} \E^{0,x}[e^{-\rho (\tau_g \wedge T_R)}\varphi_j(X_{\tau_g \wedge T_R}) + \int_0^{\tau_g \wedge T_R} e^{-\rho s} (\rho \varphi_j - \mathcal{L}\varphi_j)(X_t) dt]. \\    
		\end{array}
		$$
		Utilizing dominated convergence theorem, we can have 
		$$
		\varphi(x) =  \E^{0,x}[e^{-\rho (\tau_g \wedge T_R)} \varphi(X_{\tau_g \wedge T_R}) + \int_0^{\tau_g \wedge T_R} e^{-\rho s} (\rho \varphi -\mathcal{L}\varphi)(X_t) dt].
		$$
		Since $\rho \varphi -\mathcal{L}\varphi = 0$ on $(0,x_g)$, we obtain
		$$
		\varphi(x) =  \E^{0,x}[e^{-\rho (\tau_g \wedge T_R)} \varphi(X_{\tau_g \wedge T_R}) ].
		$$
		Using the fact that $|X_{\tau_g \wedge T_R}|$ is uniformly bounded for all $R$, we will have 
		$$
		\varphi(x) =  \lim_{R \to\infty} \E^{0,x}[e^{-\rho (\tau_g \wedge T_R)} \varphi(X_{\tau_g \wedge T_R}) ] = \E^{0,x}[e^{-\rho \tau_g } \varphi(X_{\tau_g}) ] = \E^{0,x}[e^{-\rho \tau_g } g(X_{\tau_g}) ] \leq u(x).
		$$
		
		Therefore, $\varphi(x) = u(x)$ and $\hat \tau = \tau_g$ is optimal if $x \in (0,x_g)$. We can conclude that $\varphi(x) = u(x)$ and $\hat \tau = \tau_g$ is an optimal stopping time for all $x \in \mathbb{R}^+$.
	\qed
	\end{proof}

\subsection{Expected payments}
	
	Let $X_0 = x_0 > 0$ and 
	$$
		\htau :=  \mbox{inf}\{t\geq 0: X_t \geq  \hat x \},
		~~\mbox{for some}~
		\hat x > x_0,
	$$
	we compute the expectation values $\E[e^{- \rho \htau}]$ and $\E[\htau]$ below.

	\begin{Proposition} \label{prop:payment} One has
	\begin{align*}
		\E \big[ e^{-\rho \htau} \big] = \Big( \frac{x_0}{\hat x} \Big)^m, 
		\quad \mbox{and}~~
		\E[\htau] = \frac{-1}{\mu - \frac{\sigma^2}{2}} \log\Big(\frac{x_0}{\hat x}\Big),
	\end{align*}
	where $m =\frac{1}{2} - \frac{\mu}{\sigma^2} + \sqrt{ \Big( \frac{1}{2} - \frac{\mu}{\sigma^2} \Big)^2 +\frac{2\rho}{\sigma^2} }$.
	\end{Proposition}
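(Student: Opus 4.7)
The plan is to reduce the problem to a first-hitting-time computation for a Brownian motion with drift via the logarithmic transformation, then handle the two identities by separate but standard tools: the exponential martingale for the Laplace transform, and Wald's identity (or direct computation on the drifted BM) for the expectation.

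First, I would write $X_t = x_0 \exp\big((\mu - \tfrac{\sigma^2}{2})t + \sigma W_t\big)$, set $\nu := \mu - \sigma^2/2 > 0$, and observe that $\htau$ coincides with the first time the drifted Brownian motion $Y_t := \nu t + \sigma W_t$ reaches the level $a := \log(\hat x/x_0) > 0$. Since $\nu > 0$, classical results on Brownian motion with positive drift give $\htau < \infty$ almost surely.

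For the first identity, I would use the power-martingale $M_t := X_t^m e^{-\rho t}$. A direct application of Itô's formula shows $M_t$ is a local martingale provided $m$ satisfies the characteristic equation $\mu m + \tfrac{1}{2}\sigma^2 m(m-1) - \rho = 0$; the positive root of this quadratic is precisely the constant $m$ from \eqref{eq:def_abm}. Because $X_t \le \hat x$ on the interval $[0,\htau]$, the stopped process $M_{t \wedge \htau}$ is bounded by $\hat x^m$, so the optional stopping theorem applies on the finite stopping times $t \wedge \htau$ and, by bounded convergence as $t \to \infty$, yields
\begin{equation*}
x_0^m \;=\; \E[M_0] \;=\; \E[M_{\htau}] \;=\; \hat x^m \, \E[e^{-\rho \htau}],
\end{equation*}
from which the first formula follows.

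For the second identity, I would work directly with $Y_t = \nu t + \sigma W_t$ and apply Wald's identity. Since $\nu > 0$ and $a > 0$, the hitting time $\htau$ of level $a$ by $Y$ has finite mean (standard consequence of the fact that $Y_t/t \to \nu$ a.s., or via the formula $\E[\htau]=a/\nu$ for a BM with drift). By continuity, $Y_{\htau} = a$ almost surely, and the identity $\E[Y_{\htau}] = \nu\, \E[\htau] + \sigma\, \E[W_{\htau}] = \nu\, \E[\htau]$ gives $\E[\htau] = a/\nu = \log(\hat x/x_0)/(\mu-\sigma^2/2)$, which is the claimed expression. The only mildly delicate step is justifying the optional stopping argument used for Wald's identity when $\htau$ is unbounded; this is handled by first localizing at $\htau \wedge n$, using the bound $|Y_{t \wedge \htau}| \le a$ for the Brownian part and monotone convergence for the drift part, and then letting $n \to \infty$. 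The main obstacle, such as it is, is making sure that the optional stopping arguments in both steps are justified with integrability conditions that are uniform in $t$; these follow because the relevant processes are stopped before they can exceed the level $\hat x$.
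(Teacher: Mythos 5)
Your proof is correct, but it takes a genuinely different route from the paper. The paper quotes the explicit distribution of the first passage time of a drifted Brownian motion from Jeanblanc--Yor--Chesney, obtains $\E[e^{-\rho \hat\tau}]$ by letting $t\to\infty$ in their formula for $\E[e^{-\rho\hat\tau}\1_{\{\hat\tau<t\}}]$, and then gets $\E[\hat\tau]$ by differentiating the moment generating function at $0$. You instead give a self-contained martingale argument: the power local martingale $e^{-\rho t}X_t^m$ (with $m$ the positive root of $\tfrac12\sigma^2 m(m-1)+\mu m-\rho=0$, which is exactly the $m$ of \eqref{eq:def_abm}, as the paper itself notes in the proof of Proposition \ref{prop:OT_solution}) plus optional stopping and bounded convergence for the Laplace transform, and Wald's identity on $Y_t=(\mu-\tfrac12\sigma^2)t+\sigma W_t$ for the mean. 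Your approach is more elementary and avoids the external reference; the paper's approach delivers the full law of $\hat\tau$ as a by-product. One small slip in your justification of Wald's identity: the bound $|Y_{t\wedge\hat\tau}|\le a$ is false, since $Y$ is only bounded above by $a$ before $\hat\tau$ and can become arbitrarily negative. The standard fix is: apply optional stopping at the bounded time $\hat\tau\wedge n$ to get $\nu\,\E[\hat\tau\wedge n]=\E[Y_{\hat\tau\wedge n}]\le a$, deduce $\E[\hat\tau]\le a/\nu<\infty$ by monotone convergence, and then pass to the limit in $\E[Y_{\hat\tau\wedge n}]$ using that $\sup_n|W_{\hat\tau\wedge n}|\le\sup_{t\le\hat\tau}|W_t|$ is integrable (e.g.\ by the Burkholder--Davis--Gundy or $L^2$ maximal inequality once $\E[\hat\tau]<\infty$ is known). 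With that repair the argument is complete.
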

	\begin{proof} Let 
	$\nu : = (\mu-\frac{1}{2} \sigma^2) \sigma^{-1}  \ge 0 $ , $\xi : = \sqrt{2 \rho + \nu^2}$ , $\theta := \frac{1}{\sigma} \log(\hat x/x_0)$.
	Then it follows by \cite{JM} that
	$$
		\P \big[ \htau < t \big]
		= 
		(\frac{-\theta+\nu t}{\sqrt{t}}) + e^{2\nu \theta} N(\frac{-\nu t -\theta}{\sqrt{t}})
	$$
	and hence $\P[ \htau = \infty ] =0$. 
	Further, it follows by subsection 3.3.3 of \cite{JM} that
	$$
		\E \big[ e^{-\rho \htau} \1_{ \{ \htau <t \}} \big] 
		=
		e^{(\nu - \xi) \theta} N(\frac{\xi t-\theta}{\sqrt{t}}) + e^{(\xi +\nu)\theta} N(\frac{-\xi t -\theta}{\sqrt{t}}),
	$$
	where $\nu = (\mu-\frac{1}{2} \sigma^2) \sigma^{-1} $ , $\xi = \sqrt{2 \rho + \nu^2}$ and $\theta = \frac{1}{\sigma} ln(\hat x/x_0)$.
	Let $t \to \infty$, and notice that $\frac{(\nu - \xi)}{\sigma} = -m$, 
	it follows that 
	$$
		\E \big[ e^{-\rho \htau} \big] = e^{(\nu - \xi) \theta} = e^{ \frac{(\nu - \xi)}{\sigma} \log (\hat x/x_0) }
		= 
		\Big( \frac{x_0}{\hat x} \Big)^m.
	$$
	
	We observed that $\frac{1}{2} - \frac{\mu}{\sigma^2} < 0$. Thus, we can express the square root as: 
		$$
		\sqrt{ (\frac{1}{2} - \frac{\mu}{\sigma^2})^2}  = \frac{\mu}{\sigma^2} - \frac{1}{2}. 
		$$
		Next, let us define the moment generating function of $\htau$ by 
		$$
			M_{\htau}(\rho) : = \E(e^{\rho \htau}) = (\frac{x}{\hat x})^{\frac{1}{2} - \frac{\mu}{\sigma^2} + \sqrt{ \Big( \frac{1}{2} - \frac{\mu}{\sigma^2} \Big)^2 -\frac{2\rho}{\sigma^2} }},
		$$
		one can then obtain $\E[\htau]$ by calculating the derivative of the moment generating function at $0$: 
		$$
			M_{\htau}'(0) =\frac{-1}{\mu - \frac{\sigma^2}{2}} \log \Big(\frac{x_0}{\hat x}\Big).
		$$ 
		\qed
	\end{proof}

\subsection{Proof of Theorem~\ref{theo:Nfirms}.}
\label{app:main1}

	$\mathrm{(i)}$	
	When $N=2$, since $\lambda_1 + \lambda_2 =1$, it follows that
	$$
		\hat  x_1 < \hat x_2
		\Longleftrightarrow
		\frac{2 \pi_1 \lambda_1}{1 - \lambda_2^{\gamma}} < \frac{\pi_2 (\lambda_2+1) -\pi_1 \lambda_1}{\lambda_2^{\gamma}}
		\Longleftrightarrow
		\frac{\pi_1}{\pi_2} < \frac{(1+\lambda_2)(1-\lambda_2^{\gamma})}{(1-\lambda_2)(1+\lambda_2^{\gamma})},
	$$
	which is always true as $\lambda_2 \in (0,1)$ and $\pi_1 \leq \pi_2$.
	
	When $N > 2$, by the \eqref{def:hitting_point},
	\begin{equation} \label{eq:order_condition_inter}
		\hat{x}_{i} < \hat{x}_{i+1}
		\Longleftrightarrow
		\frac{\pi_i(\lambda_i + \tilde{\lambda}_i) - \pi_{i-1} \tilde{\lambda}_{i-1}}{\bar{\lambda}_i^{\gamma} - \bar{\lambda}_{i+1}^{\gamma}} < \frac{\pi_{i+1}(\lambda_{i+1} + \tilde{\lambda}_{i+1}) - \pi_{i} \tilde{\lambda}_{i}}{\bar{\lambda}_{i+1}^{\gamma} - \bar{\lambda}_{i+2}^{\gamma}}.
	\end{equation}
        Since \( \bar{\lambda}_i> \bar{\lambda}_{i+1} \) and $\pi_i <\pi_{i+1}$ for any \( i \), we can conclude that both the numerator and denominator in \eqref{eq:order_condition_inter} are positive. 
        By direct computation, \eqref{eq:order_condition_inter} is equivalent to
        \begin{align}  \label{ineq:order_condition}
		&\big( \pi_i(\lambda_i + \tilde{\lambda}_i) - \pi_{i-1} \tilde{\lambda}_{i-1} + \pi_{i+1}(\lambda_{i+1} + \tilde{\lambda}_{i+1}) - \pi_{i} \tilde{\lambda}_{i} \big) \bar \lambda_{i+1}^\gamma \nonumber \\
		<~ &
		\big( \pi_{i+1}(\lambda_{i+1} + \tilde{\lambda}_{i+1}) - \pi_{i} \tilde{\lambda}_{i} \big)\bar \lambda_i^\gamma   +\big( \pi_i(\lambda_i + \tilde{\lambda}_i) - \pi_{i-1} \tilde{\lambda}_{i-1} \big) \bar \lambda_{i+2}^\gamma 
        \end{align}
        Notice that the function \( x \mapsto x^{\gamma} \) is convex with $\gamma \ge 2$, then $\bar{\lambda}_i> \bar{\lambda}_{i+1} > \bar{\lambda}_{i+2}$ for any $i$ and hence 
	\begin{equation*} 
		\bar{\lambda}_{i+1}^{\gamma} < p \bar{\lambda}_{i}^{\gamma} 
		+ (1-p)\bar{\lambda}_{i+2}^{\gamma}, \text{ for any } p \in [0,1].
	\end{equation*}
        This is enough to conclude that \eqref{ineq:order_condition} always holds and hence $\hat{x}_{i} < \hat{x}_{i+1}$ is always true as claimed in $\mathrm{(i)}$.
	
	\vspace{0.5em}

	\noindent $\mathrm{(ii)}$ To prove $\mathrm{(ii)-(iv)}$, it is enough to solve the multiple optimal stopping problem \eqref{eq:regu_equiv}.
	For each $i=1, \cdots, N$, let us introduce the optimal stopping problem:
	$$
		v_i (x) 
		~:=~
		\sup_{\tau \in \Tc} \E \Big[ 
			\int_0^{\tau} e^{-\rho t} \big( \bar \pi_i X_t - \ell (1- \tilde \lambda_{i-1})^{\gamma} X_t^{\gamma} \big) dt 
			-
			e^{-\rho \tau} g_i(X_{\tau})
		\Big| X_0 = x
		\Big]
	$$
	with $v_{N+1}(\cdot) \equiv 0$ and
	$$
		g_i(x) := \big( \tilde \lambda_i c_i - \tilde \lambda_{i-1} c_{i-1} \big) x + v_{i+1}( x).
	$$
	We claim that, for each $i=1, \cdots, N$,
	\begin{equation} \label{eq:claim_vi}
		v_i(x) =  \Big( \bar \pi_i b x + (1 - \tilde \lambda_{i-1})^{\gamma} a \ell x^{\gamma} + \Delta_i(\hat x_i) x^m \Big) \1_{\{x \in (0, \hat x_i)\}} + g_i(x) \1_{\{ x \in [\hat x_i, \infty)\}},
	\end{equation}
	where
	$$
		\Delta_i(x) := \big( g_i(x) - \bar \pi_i b x - (1 - \lambda_{i-1})^{\gamma} a \ell x^{\gamma} \big) x^{-m}.
	$$
	
	We next prove Claim \eqref{eq:claim_vi} by induction arguments in the following.
	First, for $i=N$, one can directly apply Proposition \ref{prop:OT_solution} to obtain that
	$$
		v_N (x) =
		\Big( \bar{\pi}_N bx + (1 - \tilde \lambda_{N})^{\gamma} a \ell x^{\gamma} +\Delta_N (\hat x_N) x^m \Big) \1_{\{x \in (0, \hat x_N)\}}
		+ 
		g_N(x) \1_{\{x \in [ \hat x_N, \infty)\}},
	$$
	so that Claim \eqref{eq:claim_vi} holds for $i=N$.
	Moreover, the optimal stopping time for the optimal stopping problem of $v_N$ is given by $\htau_N= \inf\{ t \ge 0 ~: X_t \ge \hat x_N \}$.
	
	Next, for $i= N-1$, we study the optimal stopping problem:
	\begin{equation} {\label{eq:N-1}}
		v_{N-1}(x)
		~:=~ 
		\sup_{\tau \in \mathcal{T}} 
		\E \Big[ 
			\int_0^{\tau} e^{-\rho t} f_{N-1}(X_t) \dd t + e^{-\rho \tau} g_{N-1}(X_{\tau})
		\Big| X_0 = x
		\Big],
	\end{equation}   
	where $f_{N-1} (x) =\bar \pi_{N-1} x-\ell (1- \tilde \lambda_{N-N})^{\gamma} x^{\gamma}$ and 
	\begin{align*}
		g_{N-1}(x)
		&~=~
		 \Big( - (\tilde{\lambda}_{N-1} c_{N-1} - \tilde{\lambda}_{N-2} c_{N-2})x+  \bar{\pi}_N bx + \bar{\lambda}_{N}^{\gamma} a \ell x^{\gamma} +\Delta_N(\hat x_N) x^m \Big) \1_{\{x \in (0, \hat x_N)\}} \\
		 &~ -~ (\tilde{\lambda}_N c_N - \tilde{\lambda}_{N-2} c_{N-2})x  \1_{\{x \in [ \hat x_N, \infty)\}} .
	\end{align*}
	As in the proof of Proposition \ref{prop:OT_solution}, by applying It\^o's formula on $X^{\gamma}$, one obtains that
	$$
		v_{N-1}(x)
		=
		\tilde v_{N-1}(x) + b\bar \pi_{N-1} x + a \ell \big(1- \tilde \lambda_{N-2} \big)^{\gamma} x^{\gamma}  ,
	$$
	where
	$$
		\tilde v_{N-1}(x) 
		~:=~
		\sup_{\tau \in \mathcal{T}} \E \Big [ e^{-\rho \tau} \tilde g (X_{\tau})  \Big| X_0 = x \Big],
	$$
	for 
	$ \tilde g (x) : =   \tilde g_1(x) \1_{\{x \in (0, \hat x_N)\}} + \tilde g_2(x)  \1_{\{x \in [ \hat x_N, \infty)\}}$,
	$$
		\tilde g_1(x) := - (\tilde{\lambda}_{N-1} c_{N-1} - \tilde{\lambda}_{N-2} c_{N-2})x -\lambda_{N-1}c_{N-1} x -a \ell((1- \tilde \lambda_{N-2})^{\gamma} - (1- \tilde \lambda_{N-1})^{\gamma})x^{\gamma}  +\Delta_N(\hat x_N) x^m,
	$$
	and
	$$
		\tilde g_2(x) :=   -(\tilde{\lambda}_N c_N - \tilde{\lambda}_{N-2} c_{N-2})x -b\bar \pi_{N-1} x - a \ell (1- \tilde \lambda_{N-1})^{\gamma} x^{\gamma} .
	$$
	Further, let us define	
	$$
		\varphi_{N-1}(x) : = \Big(\Delta_{N-1}(\hat  x_{N-1}) x^m \Big) \1_{\{x \in (0,  \hat x_{N-1})\}} + g'(x) \1_{\{x \in [\hat x_{N-1}, \infty)\}} .
	$$
	By applying Proposition \ref{prop:verification}, one can obtain that
	$$
		\tilde v_{N-1}(x) = \varphi_{N-1}(x)  = \E \Big[ e^{-\rho \hat \tau_{N-1}} \tilde g(X_{\hat \tau_{N-1} }) \Big| X_0 = x  \Big]. 
	$$
	Indeed, to check the conditions in Proposition \ref{prop:verification}, one can set $\mathcal{V} := \{\hat x_{N-1}, \hat x_{N} \}  $ and $x_g = \hat x_{N-1}$, 
	then Conditions (1)(2)(5)(6) in Proposition \ref{prop:verification} is easy to verify.
	For Condition (3) in Proposition \ref{prop:verification},
	let us define $h(x) : = \varphi_{N-1}(x) - \tilde g(x)$, so that that $h(x) = 0 $ when $x \in [\hat x_{N-1}, \infty)$. 
	Further, for $x \in (0,  \hat x_{N-1})$, one has
	\begin{align*}
		h(x) 
		~=~& a \ell((1- {\lambda}_{N-2})^{\gamma} - (1- {\lambda}_{N-1})^{\gamma})x^{\gamma} + (\Delta_{N-1}(\hat x_{N-1})-\Delta_N(\hat x_N) ) x^m \\
		&~+~ (\tilde{\lambda}_{N-1} c_{N-1} +\lambda_{N-1}c_{N-1} - \tilde{\lambda}_{N-2} c_{N-2})x.
	\end{align*}
	Since $\lambda_{N-1} c_{N-1} +\tilde{\lambda}_{N-1} c_{N-1} - \tilde{\lambda}_{N-2} c_{N-2}>0$, recall the definition of $\Delta_i(\cdot)$, one can compute that
	$$
		h(0+)=0,\, h(\hat x_{N-1}) = 0, \, h'(0+)=0, \, h'(\hat x_{N-1}) = 0,\, h''(\hat x_{N-1}) > 0.
	$$
	Thus $h(x)>0$ for all $x \in (0,\bar x_{N-1})$ and hence Condition (3) holds true. 
	Notice that Condition (4) is easy to check as  $\rho \varphi_{N-1} - \mathcal{L} \varphi_{N-1} =0$ on $\mathbb{R}^+$.
	Therefore,
	$$
		v_{N-1}(x)
		=
		 \varphi_{N-1}(x) + b\bar \pi_{N-1} x + a \ell \big(1- \tilde \lambda_{N-2} \big)^{\gamma} x^{\gamma} ,
	$$
	and 
	$\hat \tau_{N-1}$ is an optimal stopping time for the optimal stopping problem $v_{N-1}(x)$.
	By iterating the same argument for $i - N-2, \, N-3, \dots, 1$, 
	one can conclude that the value function $v(x_0) = v_1(x_0)$ is given by  \eqref{eq:value_fun}, and that $\hat \tau_1 < \cdots < \hat \tau_N$ is an optimal solution to \eqref{eq:regu_equiv}.
	This proves $\mathrm{(iv)}$.

	\vspace{0.5em}
	
	Finally, following  \cite{He23}, the optimal incentive scheme $\widehat Y$ should be designed such that, for $i=1, \cdots, N$, 
	\begin{align*}
		\widehat Y_{\htau_i} 
		~=~
		\E \bigg[  \sum_{j=i}^{N-1} \int_{\htau_{j}}^{\htau_{j+1}}  e^{-\rho (s -\htau_i) } \pi_j X_t ds + \int_{\htau_N}^\infty e^{-\rho (s -\htau_i) } \pi_N X_t ds \bigg| \mathcal{F}_{\htau_i} \bigg] 
		~=~
		c_i X_{\htau_i} +d_i X_{\htau_i}^m,
	\end{align*} 
	with
	$$
		c_i := \frac{\pi_i}{\rho - \mu},  
		~~~
		d_i := \sum_{j=i+1}^N \frac{\pi_j - \pi_{j-1}}{\rho - \mu} \Big(\1_{ \{x_0 \geq \hat x_j\}}  x_0^{1-m} + \1_{ \{x_0 < \hat x_j\}} \hat x_j^{1-m} \Big),
	$$
	and
	$$
		\widehat Y_t < c_i X_t + d_i X^m_t,
		~~\mbox{for}~~
		t \in (\hat \tau_{i-1}, \hat \tau_i).
	$$
	Moreover, under the optimal scheme $\widehat Y$, the optimal exit time of Agent $i$ is given by $\hat \tau_i$, for $i=1, \cdots, N$.
	We hence conclude the proof of $\mathrm{(ii)}$ and $\mathrm{(iii)}$.
	\qed

\subsection{Proof of results on the two interacted markets}

	The prove Theorem \ref{thm:NashE}, let us consider the optimal stopping problem of one regulator.

	\begin{Proposition} \label{prop:opt_stopping_1}
		Assume that the initial condition $X_0 = x_0$ satisfies $x_0 < x_{1, l}$, and that $\tau_2$ is given by $\tau_2 :=\inf \{ t>0~: X_t \geq x_2\} $ with some constant $x_2 > x_0$.
		Then
		\begin{itemize}
			\item $\tau_{1,l} := \inf \{ t > 0 ~: X_t \ge x_{1,l} \}$ is the optimal stopping time to the problem \eqref{eq:pb_regulator1}  if $x_2 \ge x_1$;
			
			\item $\tau_{1,h} :=  \inf \{ t > 0 ~: X_t \ge x_{1,h} \}$ is the optimal solution to \eqref{eq:pb_regulator1} if $x_2 \le x_1$.
		\end{itemize}
		
	\end{Proposition}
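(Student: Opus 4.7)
The plan is to reduce problem \eqref{eq:pb_regulator1} to a standard optimal stopping problem via It\^o's formula, split according to whether $\tau_1\le\tau_2$ or $\tau_1\ge\tau_2$, apply Proposition~\ref{prop:OT_solution} in each regime, and then compare the two candidate values.

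First, applying It\^o's formula to $e^{-\rho t}X_t$ and $e^{-\rho t}X_t^{\gamma}$ (valid under $\rho\in(\mu,\gamma\mu+\tfrac12\sigma^2(\gamma^2-\gamma))$), one has for every $\tau\in\Tc$
$$\E\Big[\int_0^{\tau}e^{-\rho t}X_t\,dt\Big]=b\bigl(x_0-\E[e^{-\rho\tau}X_{\tau}]\bigr),\qquad \E\Big[\int_0^{\tau}e^{-\rho t}X_t^{\gamma}\,dt\Big]=a\bigl(\E[e^{-\rho\tau}X_{\tau}^{\gamma}]-x_0^{\gamma}\bigr).$$
Splitting the damage integral at $\tau_1\wedge\tau_2$ and using $Z^1_{\tau_1}=\pi_1\lambda_1 b\,X_{\tau_1}$, one rewrites $J_1(x_0,\tau_1,\tau_2)$ as a constant depending only on $x_0$ and on $\tau_2$ minus $\E[e^{-\rho\tau_1}g_c(X_{\tau_1})]$, where $g_c=g_1$ with $g_1(x):=2\pi_1\lambda_1 b\,x+\tfrac12 a\ell(1-\lambda_2^{\gamma})x^{\gamma}$ on the event $\{\tau_1\le\tau_2\}$ (Case~1), and $g_c=g_2$ with $g_2(x):=2\pi_1\lambda_1 b\,x+\tfrac12 a\ell\lambda_1^{\gamma}x^{\gamma}$ on $\{\tau_1\ge\tau_2\}$ (Case~2).

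Second, since $\tau_2=\inf\{t:X_t\ge x_2\}$, restricting to hitting-time strategies $\tau_y:=\inf\{t:X_t\ge y\}$ the constraint $\tau_1\le\tau_2$ becomes $y\le x_2$. By Proposition~\ref{prop:payment}, $\E[e^{-\rho\tau_y}X_{\tau_y}^k]=y^k(x_0/y)^m$ for $k\in\{1,\gamma\}$, so maximizing $J_1$ amounts to minimizing $y\mapsto(x_0/y)^m g_c(y)$ over the admissible range of $y$. A first-order analysis (equivalently Proposition~\ref{prop:OT_solution} applied with $a_1=-2\pi_1\lambda_1 b$, $a_2=\alpha=\beta=0$, and $a_3=-\tfrac12 a\ell(1-\lambda_2^{\gamma})$ in Case~1, $a_3=-\tfrac12 a\ell\lambda_1^{\gamma}$ in Case~2) yields the unique minimizers $y=z_{1,l}$ and $y=z_{1,h}$, matching \eqref{eq:def_zlh}. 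Since $z_{1,l}<z_{1,h}$, whenever $z_{1,l}\le x_2\le z_{1,h}$ both unconstrained optima are feasible, yielding candidate values $V_1:=J_1(x_0,\tau_{1,l},\tau_2)$ and $V_2:=J_1(x_0,\tau_{1,h},\tau_2)$.

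Third, the key algebraic step is to verify, using that the damage correction between the two regimes is $\tfrac12 a\ell[(1-\lambda_1^{\gamma})-\lambda_2^{\gamma}]x_0^m x_2^{\gamma-m}$, the identity
$$V_1-V_2=\tfrac12\,a\ell\,(1-\lambda_1^{\gamma}-\lambda_2^{\gamma})\,x_0^m\,\bigl(x_2^{\gamma-m}-z_1^{\gamma-m}\bigr),$$
which upon substituting $g_1(z_{1,l}),g_2(z_{1,h})$ into $x_0^m[-z_{1,l}^{-m}g_1(z_{1,l})+z_{1,h}^{-m}g_2(z_{1,h})]$ follows directly from the definition \eqref{eq:def_z} of $z_1$. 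Since $\gamma>m$ and $1-\lambda_1^{\gamma}-\lambda_2^{\gamma}>0$, the right-hand side is strictly increasing in $x_2$ and vanishes only at $x_2=z_1=x_1$; hence $V_1\ge V_2$ iff $x_2\ge x_1$. The boundary regimes $x_2<z_{1,l}$ or $x_2>z_{1,h}$ are absorbed as follows: in each, the constrained maximum of the losing case is attained at $y=x_2$, which by continuity of $y\mapsto J_1(x_0,\tau_y,\tau_2)$ across $y=x_2$ equals the other case evaluated at $y=x_2$, and is strictly dominated by the unconstrained optimum of the winning case thanks to the strict monotonicity of $y\mapsto(x_0/y)^m g_c(y)$ away from its unique minimum.

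The main technical obstacle is the bookkeeping in the identity $V_1-V_2=\tfrac12 a\ell(1-\lambda_1^{\gamma}-\lambda_2^{\gamma})x_0^m(x_2^{\gamma-m}-z_1^{\gamma-m})$ and in matching the crossover with \eqref{eq:def_z}; the remaining steps reduce to It\^o, Proposition~\ref{prop:payment}, and the free-boundary analysis of Section~\ref{app:comp}.
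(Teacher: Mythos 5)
Your proposal is correct and follows essentially the same route as the paper: an It\^o reduction of \eqref{eq:pb_regulator1}, a split into the two constrained sub-problems $\tau_1\ge\tau_2$ and $\tau_1<\tau_2$ solved via hitting-time parametrization (Proposition~\ref{prop:payment} / Proposition~\ref{prop:OT_solution}), yielding the thresholds $z_{1,l}$ and $z_{1,h}$, and a comparison of the two candidate values whose crossover at $x_2=z_1$ is exactly the definition \eqref{eq:def_z}. Your explicit identity for $V_1-V_2$ checks out against the paper's expressions for $w_{1,1}$ and $w_{1,2}$, and your treatment of the boundary regimes $x_2<z_{1,l}$, $x_2>z_{1,h}$ matches the paper's case analysis.
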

	\begin{proof}
	By applying It\^o's formula on $(X^{\gamma}_t)_{t \ge 0}$, one can reformulation the optimal stopping problem \eqref{eq:pb_regulator1} as
	\begin{align*} 
		w_1(x_0) ~:=~ \pi_1 \lambda_1 b x_0 & +\frac{1}{2} a \ell x_0^{\gamma} \nonumber \\
			+~\sup_{\tau_1 \in \Tc} 
			\E \Big[ & \1_{\tau_1 \geq \tau_2} \Big(e^{-\rho \tau_1} \big(-2\pi_1 \lambda_1 b X_{\tau_1} -\frac{1}{2} a \ell \lambda_1^{\gamma} X_{\tau_1}^{\gamma} \big) - \frac{1}{2} a\ell (1-\lambda_1^{\gamma}) x_2^{\gamma} \big( \frac{x_0}{x_2}\big)^m \Big) \nonumber \\
			& + \1_{\tau_1 < \tau_2} \Big(e^{-\rho \tau_1} \big(-2\pi_1 \lambda_1 b X_{\tau_1} -\frac{1}{2} a \ell (1-\lambda_2^{\gamma}) X_{\tau_1}^{\gamma} \big) - \frac{1}{2}  a\ell \lambda_2^{\gamma} x_2^{\gamma} \big( \frac{x_0}{x_2}\big)^m \Big) \Big] .
	\end{align*}
	Notice that $x_0 < x_2$, by considering the two sub-problems, one has 
	$$
		w_1 (x_0) = \max( w_{1,1}(x_0), w_{1,2} (x_0) ),
	$$ 
	where	
	\begin{align*}
		w_{1,1}(x_0) :=&
		\sup_{\tau_1 \in\Tc, ~\tau_1 \geq \tau_2} \E \Big[ e^{-\rho \tau_1} \big( -2\pi_1 \lambda_1 b X_{\tau_1} - \frac{1}{2}  a \ell \lambda_1^{\gamma} X_{\tau_1}^{\gamma} \big) \Big] 
			+\pi_1 \lambda_1 b x_0 +\frac{1}{2} a \ell x_0^{\gamma} - \frac{1}{2} a\ell (1-\lambda_1^{\gamma}) x_2^{\gamma} \big( \frac{x_0}{x_2}\big)^m    \\
		=&~ \1_{x_{1,h} <x_2} \big( -2 \pi_1 \lambda_1 b x_2 - \frac{1}{2} a\ell \lambda_1^{\gamma} x_2^{\gamma} \big) \big( \frac{x_0}{x_2} \big)^m \\
			&+ \1_{x_{1,h} \geq x_2} \big(-2 \pi_1 \lambda_1 b x_{1,h}-\frac{1}{2} a \ell \lambda_1^{\gamma} x_{1,h}^{\gamma} \big) \big(\frac{x_0}{x_{1,h}} \big)^m 
			~+~ 
		\pi_1 \lambda_1 b x_0 +\frac{1}{2} a \ell x_0^{\gamma} - \frac{1}{2} a\ell (1-\lambda_1^{\gamma})  x_2^{\gamma} \big( \frac{x_0}{x_2}\big)^m ,
	\end{align*}
	with optimal solution given by $ \tau_{1,h} := \inf \{ t>0 , X_t \geq x_{1,h} \} $ if $x_{1,h} \geq x_2$ or $\tau_2$ if $x_{1,h} < x_2$.  
	The second equality is based on the technique mentioned in Remark \ref{Remark: hitting_sol}.  
	Similarly, $w_{1,2}(x_0)$ is the value function of the second sub-problem:
	\begin{align*}                          
		w_{1,2}(x_0) :=& 
		\sup_{\tau_1 \in \Tc, ~\tau_1 < \tau_2} \E \Big[ e^{-\rho \tau_1}(-2\pi_1 \lambda_1 b X_{\tau_1} -\frac{1}{2} a \ell (1- \lambda_2^{\gamma}) X_{\tau_1}^{\gamma} ) \Big] 
			+\pi_1 \lambda_1 b x_0 +\frac{1}{2} a \ell x_0^{\gamma} - \frac{1}{2} a\ell \lambda_2^{\gamma} x_2^{\gamma} \big( \frac{x_0}{x_2}\big)^m    \\
		= &~ \1_{x_{1,l} <x_2} \big(-2 \pi_1 \lambda_1 b x_{1,l} - \frac{1}{2} a\ell (1-\lambda_2^{\gamma}) x_{1,l}^{\gamma} \big) \big( \frac{x_0}{x_{1,l}} \big)^m \\
			&+ \1_{x_{1,l} \geq x_2} \big(-2 \pi_1 \lambda_1 b x_{2} -\frac{1}{2} a \ell (1-\lambda_2^{\gamma}) x_{2}^{\gamma}\big ) \big(\frac{x_0}{x_{2}} \big)^m
			~+~
			\pi_1 \lambda_1 b x_0 +\frac{1}{2} a \ell x_0^{\gamma} - \frac{1}{2}  a\ell \lambda_2^{\gamma} x_2^{\gamma} \big( \frac{x_0}{x_2}\big)^m ,
	\end{align*}
	with optimal stopping time given by $\tau_{1,l} := \inf \{ t>0 , X_t \geq x_{1,l} \}$ if $x_{1,l} < x_2$ and $\tau_2$ if $x_{1,l} \geq x_2$. 
	By comparing $w_{1,1}(x_0)$ and $w_{1,2}(x_0)$ for different positions of $x_2$ and use the constant $z_1$ defined in \eqref{eq:def_z},
	it follows that 
	$$
		w_1(x_0) = 
		\begin{cases}
			(-2 \pi_1 \lambda_1 b x_{1,h}- \frac{1}{2}  a \ell \lambda_1^{\gamma} x_{1,h}^{\gamma}) \big(\frac{x_0}{x_{1,h}} \big)^m +\pi_1 \lambda_1 b x_0 +\frac{1}{2}  a \ell x_0^{\gamma} - \frac{1}{2} a\ell (1-\lambda_1^{\gamma}) x_2^{\gamma} \big( \frac{x_0}{x_2}\big)^m,
			& \textrm{if $x_2 \leq z_1$}, \\
			(-2 \pi_1 \lambda_1 b x_{1,l} - \frac{1}{2} a\ell (1-\lambda_2^{\gamma}) x_{1,l}^{\gamma}) 
			\big(\frac{x_0}{x_{1,l}} \big)^m +\pi_1 \lambda_1 b x_0 + \frac{1}{2} a \ell x_0^{\gamma} - \frac{1}{2}  a\ell \lambda_2^{\gamma} x_2^{\gamma} \big( \frac{x_0}{x_2}\big)^m,
			& \textrm{if $x_2 > z_1$.}\\
		\end{cases}
	$$
	Moreover,  $\tau_{1,l}$ is the optimal solution to \eqref{eq:pb_regulator1} if $x_2 \ge z_1$, and $\tau_{1,h}$ is the optimal solution if  $x_2 \leq z_1$.
	\qed
	\end{proof}

\hs

Similarly, one can solve the optimal stopping problem \eqref{eq:pb_regulator2} of Regulator 2.
Recall that the constants $z_{2,l}$, $z_{2,h}$ and $z_2$ are defined in \eqref{eq:def_zlh} and \eqref{eq:def_z}, and one has $z_{2,l} < z_2 < z_{2, h}$.
Further, the optimal stopping problem \eqref{eq:pb_regulator2} can be reformulated to
	\begin{align*}
		w_2 (x_0) :=
		\pi_2 \lambda_2 b x_0 + & \frac{1}{2} a \ell x_0^{\gamma} \\
		+
		\sup_{\tau_2 \in \Tc} 
			\E \Big[ & \1_{\tau_1 \geq \tau_2} \Big(e^{-\rho \tau_2}(-2\pi_2 \lambda_2 b X_{\tau_2} -\frac{1}{2} a \ell (1-\lambda_1^{\gamma}) X_{\tau_2}^{\gamma} ) - \frac{1}{2} a\ell \lambda_1^{\gamma} x_1^{\gamma} \big( \frac{x_0}{x_1}\big)^m \Big) \\
			& + \1_{\tau_1 < \tau_2} \Big(e^{-\rho \tau_2}(-2\pi_2 \lambda_2 b X_{\tau_2} -\frac{1}{2} a \ell \lambda_2^{\gamma} X_{\tau_2}^{\gamma} ) - \frac{1}{2}  a\ell (1-\lambda_2^{\gamma}) x_1^{\gamma} \big( \frac{x_0}{x_1}\big)^m \Big) \Big] .
	\end{align*}
	
	\begin{Proposition} \label{prop:opt_stopping_2}
		Assume that $x_0 < x_1$, $x_0 < x_{2,l}$ and let $\tau_1 := \inf \{ t>0: X_t \geq x_1 \}$.
		Then the value function $w_2(x_0)$ of \eqref{eq:pb_regulator2} is given by
		\begin{align*}
			w_2 (x_0) &=~ \max( w_{2,1}(x_0) , w_{2,2}(x_0) ) \\
			&= 
			\begin{cases}
			(-2 \pi_2 \lambda_2 b x_{2,h}- \frac{1}{2} a \ell \lambda_2^{\gamma} x_{2,h}^{\gamma}) \big(\frac{x_0}{x_{2,h}} \big)^m +\pi_2 \lambda_2 b x_0 +\frac{1}{2} a \ell x_0^{\gamma} - \frac{1}{2} a\ell (1-\lambda_2^{\gamma}) x_1^{\gamma} \big( \frac{x_0}{x_1}\big)^m,
			& \textrm{if $x_1 \leq z_2$,} \\
			(-2 \pi_2 \lambda_2 b x_{2,l} - \frac{1}{2} a\ell (1-\lambda_1^{\gamma}) x_{2,l}^{\gamma}) 
			\big(\frac{x_0}{x_{2,l}} \big)^m +\pi_2 \lambda_2 b x_0 +\frac{1}{2} a \ell x_0^{\gamma} - \frac{1}{2} a\ell \lambda_1^{\gamma} x_1^{\gamma} \big( \frac{x_0}{x_1}\big)^m,
			& \textrm{if $x_1 > z_2$,}
			\end{cases}
		\end{align*}
		where
		\begin{align*}
			w_{2,1}(x_0) 
			=& 
			\sup_{\tau_2 \in \Tc, \tau_2 \le \tau_1} \E \Big[ e^{-\rho \tau_2} \big( -2\pi_2 \lambda_2 b X_{\tau_2} -\frac{1}{2} a \ell (1-\lambda_1^{\gamma}) X_{\tau_2}^{\gamma} \big) \Big] 
				+ \pi_2 \lambda_2 b x_0 +\frac{1}{2} a \ell x_0^{\gamma} - \frac{1}{2} a\ell \lambda_1^{\gamma} x_1^{\gamma} \big( \frac{x_0}{x_1}\big)^m    \\
			=&~
			\1_{x_{2,l} <x_1} (-2 \pi_2 \lambda_2 b x_{2,l} - \frac{1}{2} a\ell (1-\lambda_1^{\gamma}) x_{2,l}^{\gamma}) \big( \frac{x_0}{x_{2,l}} \big)^m \\
			&+~ \1_{x_{2,l} \geq x_1} (-2 \pi_2 \lambda_2 b x_{1}-\frac{1}{2} a \ell (1-\lambda_1^{\gamma}) x_{1}^{\gamma}) \big(\frac{x_0}{x_{1}} \big)^m
			~+~\pi_2 \lambda_2 b x_0 +\frac{1}{2} a \ell x_0^{\gamma} - \frac{1}{2} a\ell \lambda_1^{\gamma} x_1^{\gamma} \big( \frac{x_0}{x_1}\big)^m ,
		\end{align*}	
		and
		\begin{align*}
			w_{2,2}(x_0) 
			=&
			\sup_{\tau_1 < \tau_2} \E\Big [ e^{-\rho \tau_2}(-2\pi_2 \lambda_2 b X_{\tau_2} -\frac{1}{2} a \ell \lambda_2^{\gamma} X_{\tau_2}^{\gamma} ) \Big] 
				+\pi_2 \lambda_2 b x_0 +\frac{1}{2} a \ell x_0^{\gamma} - \frac{1}{2} a\ell (1-\lambda_2^{\gamma}) x_1^{\gamma} \big( \frac{x_0}{x_1}\big)^m    \\
			=&~
			\1_{x_{2,h} <x_1} (-2 \pi_2 \lambda_2 b x_{1} - \frac{1}{2} a\ell \lambda_2^{\gamma} x_{1}^{\gamma}) \big( \frac{x_0}{x_{1}} \big)^m \\
			& +~ \1_{x_{2,h} \geq x_1} (-2 \pi_2 \lambda_2 b x_{2,h}
			~-~ \frac{1}{2} a \ell \lambda_2^{\gamma} x_{2,h}^{\gamma}) \big(\frac{x_0}{x_{2,h}} \big)^m  
			+\pi_2 \lambda_2 b x_0 +\frac{1}{2} a \ell x_0^{\gamma} - \frac{1}{2}  a\ell (1-\lambda_2^{\gamma}) x_1^{\gamma} \big( \frac{x_0}{x_1}\big)^m.
		\end{align*}
		Moreover, $\tau_{2,h} := \inf\{ t > 0 ~: X_t \ge x_{2,h} \}$ is an optimal solution to \eqref{eq:pb_regulator2} if $x_1 \le z_2$,
		and $\tau_{2,l} := \inf\{ t > 0 ~: X_t \ge x_{2,l} \}$ is an optimal solution to \eqref{eq:pb_regulator2} if $x_1 \ge z_2$.
	\end{Proposition}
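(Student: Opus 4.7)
The plan is to mirror the argument used for Proposition~\ref{prop:opt_stopping_1} by swapping the roles of the two regulators. First, I would rewrite the damage term \eqref{eq:def_Damage2} by decomposing on the event $\{\tau_1 \le \tau_2\}$ versus $\{\tau_1 > \tau_2\}$: on the former, the joint survival region collapses to $X_t$ being multiplied by $\lambda_2$ after $\tau_1$; on the latter, it collapses to $\lambda_1 X_t$ after $\tau_2$. Using that $\tau_1 = \inf\{t\ge 0: X_t \ge x_1\}$ is a hitting time and that $x_0 < x_1$, the contribution coming from $\tau_1$ can be computed in closed form via Proposition~\ref{prop:payment}, which reduces the problem \eqref{eq:pb_regulator2} to a single stopping problem in $\tau_2$ only.

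Next, I would apply Itô's formula to $e^{-\rho t} X_t^\gamma$ and to $e^{-\rho t} X_t$ exactly as in the proof of Proposition~\ref{prop:OT_solution}, converting the running integrals into boundary terms, producing the additive constant $\pi_2 \lambda_2 b x_0 + \frac12 a \ell x_0^\gamma$ and the driving functional $e^{-\rho \tau_2}(-2\pi_2\lambda_2 b X_{\tau_2} - \tfrac12 a\ell\, \Gamma\, X_{\tau_2}^\gamma)$, where $\Gamma = 1-\lambda_1^\gamma$ on $\{\tau_2 \le \tau_1\}$ and $\Gamma = \lambda_2^\gamma$ on $\{\tau_2 > \tau_1\}$. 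Splitting $\sup_{\tau_2 \in \Tc}$ as the maximum of the two constrained suprema $w_{2,1}$ (with $\tau_2 \le \tau_1$) and $w_{2,2}$ (with $\tau_2 > \tau_1$) is then immediate because the running part of the objective is additive across the two events and the constraint is determined by the ordering.

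I would then solve each constrained subproblem. For $w_{2,1}$, since the event $\{\tau_2 \le \tau_1\}$ is exactly $\{\tau_2 \le \tau_{x_1}\}$, one can use the hitting time Ansatz of Remark~\ref{Remark: hitting_sol}: optimize the single-variable expression $\bigl(-2\pi_2\lambda_2 b y - \tfrac12 a\ell(1-\lambda_1^\gamma)y^\gamma\bigr)(x_0/y)^m$ over $y \in (x_0, x_1]$. First-order conditions yield the unconstrained optimizer $z_{2,l}$; the constrained optimum is then $z_{2,l}$ if $z_{2,l} < x_1$, and $x_1$ otherwise. An analogous computation over $y \in (x_1,\infty)$ gives $w_{2,2}$, with unconstrained optimizer $z_{2,h}$ and boundary fallback $x_1$, yielding the claimed expressions.

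The last step, and I expect the main bookkeeping obstacle, is the comparison $w_{2,1}$ versus $w_{2,2}$. I would verify that the threshold $z_2$ defined in \eqref{eq:def_z} is precisely the indifference level: plugging the closed-form expressions into $w_{2,1}(x_0) = w_{2,2}(x_0)$ and using that all terms share the $x_0^m$ factor (apart from the common additive constants, which cancel), the equation reduces to a scalar identity in $x_1$ whose unique positive solution is $z_2$. Monotonicity in $x_1$ of the difference $w_{2,2}-w_{2,1}$ then determines the sign on either side of $z_2$, giving that $\tau_{2,h}$ is optimal when $x_1 \le z_2$ and $\tau_{2,l}$ when $x_1 \ge z_2$. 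The ordering $z_{2,l} < z_2 < z_{2,h}$ (from $\lambda_1^\gamma + \lambda_2^\gamma < 1$) ensures the subproblems are nondegenerate and that the hitting times $\tau_{2,l},\tau_{2,h}$ are well defined under the assumption $x_0 < x_{2,l}$.
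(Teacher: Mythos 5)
Your proposal is correct and follows essentially the same route as the paper: the paper proves the analogous Proposition~\ref{prop:opt_stopping_1} for Regulator~1 in detail (It\^o reformulation of the damage term, splitting into the two ordered subproblems, the hitting-time parametrization of Remark~\ref{Remark: hitting_sol}, and comparison of the two constrained values) and simply notes that Regulator~2's problem is solved ``similarly,'' which is exactly what you carry out, including the correct identification of $z_2$ as the indifference level and the monotonicity argument for the sign of $w_{2,2}-w_{2,1}$ on either side of it.
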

	
\hs

	\noindent {\bf Proof of Theorem \ref{thm:NashE}}
	It is an immediate consequence of Propositions \ref{prop:opt_stopping_1} and \ref{prop:opt_stopping_2}.
	\qed

\hs
	
	\noindent {\bf Proof of Proposition \ref{prop:indiv_contract}}
	$\mathrm{(i)}$ Let us define optimal stopping problem
	\begin{equation}
	\begin{split}
	\sup_{\tau_1 \in \mathcal{T}} 	K_1(x, \tau_1, \tau_2) &:= J_1(x, \tau_1, \tau_2) - \frac{1}{2} D(x, \tau_1, \tau_2), \\
	\sup_{\tau_2 \in \mathcal{T}} 	K_2(x, \tau_1, \tau_2) &:= J_2(x, \tau_1, \tau_2) - \frac{1}{2} D(x, \tau_1, \tau_2).
	\end{split}
	\label{eq:K_12}
	\end{equation}
	
	We observe that the optimal solution to the optimal social planner problem \eqref{eq:individual_compen} provides a Nash equilibrium solution in the sense of Definition \eqref{def:Nash}. 
	Indeed, if $(\tau_1^*, \tau_2^*)$ is optimal for the problem \eqref{eq:individual_compen}, it is clear that $\tau_1^*$ is optimal for the sub-problem $\sup_{\tau_1 \in \mathcal{T}} J(x_0, \tau_1, \tau^*_2) + \frac{1}{2} D(x_0, \tau_1, \tau^*_2)$, and $\tau_2^*$ is optimal for the sub-problem $\sup_{\tau_2 \in \mathcal{T}} J(x_0, \tau^*_2, \tau_1) + \frac{1}{2} D(x_0, \tau^*_2, \tau_1)$. 
	Therefore, it is a Nash equilibrium solution in the sense of Definition \eqref{def:Nash}.
			
	\vspace{0.5em}

	\noindent $\mathrm{(ii)}$ We next follow exactly the same approach in the proof of Theorem \ref{thm:NashE} to find all Nash equilibriums in the form of hitting times of $X$.
Concretely, 

\begin{itemize}

	\item When $\hat z_1 < \hat z_2$, 
	there are exactly two (hitting time) Nash equilibrium solutions $({\hat \tau_{1,h}}, {\hat \tau_{2,l}})$ and $ ( {\hat \tau_{1,l}} , {\hat\tau_{2,h}})$.
	By comparing the optimal reward, it turns out that $ ( {\hat \tau_{1,l}} , {\hat \tau_{2,h}})$ is the optimal solution to \eqref{eq:individual_compen} when one considers only hitting times.
	
	\item When ${\hat z_2} < {\hat z_1}$,
	one can apply the similar argument to conclude that $ ( {\hat \tau_{1,h}} , {\ \tau_{2,l}})$ is the optimal solution to \eqref{eq:individual_compen} when one considers only hitting times.
	
	\item When $\hat z_1 = \hat z_2$, there are exactly two (hitting time) Nash equilibrium solutions 
	given by $({\hat \tau_{1,h}}, {\hat \tau_{2,l}})$ and $ ( {\hat \tau_{1,l} },  {\hat \tau_{2,h}})$.
	As the two solutions have equal value functions $\hat u_1(x_0) = \hat u_2(x_0)$,
	both are optimal solution to \eqref{eq:individual_compen}.
\end{itemize}
\qed

\bibliographystyle{plain}

\end{document}